\newtheorem{definition}{Definition}[section]
\newtheorem{theorem}[definition]{Theorem}
\newtheorem{lemma}[definition]{Lemma}
\newtheorem{corollary}[definition]{Corollary}
\newtheorem{proposition}[definition]{Proposition}
\theoremstyle{definition}
\newcommand\style{\mathcal }          
\newcommand{\T}{\style{T}}
\newcommand{\F}{\style{O}}
\newcommand{\B}{\style{B}}
\newcommand{\M}{\style{M}}
\renewcommand{\H}{\style{H}}
\newcommand{\K}{\style K}
\newcommand{\N}{\style{N}}
\newcommand{\X}{\style{X}}
\newcommand{\R}{\style{R}}
\newcommand\tr{ \mbox{\rm Tr} } 
\newcommand\oss{{\style S}}
\newcommand\ost{{\style T}}
\newcommand\spec{{\rm Sp}}      
\newcommand\csta{{\style A}}
\newcommand\cstar{{\rm C}^*}                              
\newcommand\ce{\simeq_{\rm cl} }   
\newcommand\ace{\simeq_{\rm ap-cl} }                         
\newcommand\lec{\ll_{\rm cl}}   
\newcommand\lea{\ll_{\rm ap-cl}}   
\newcommand\leac{\ll_{\rm ac}}   
\newcommand\lef{\ll_{\rm f}}   
\begin{document}
\title{Approximately Clean Quantum Probability Measures}

\author{Douglas Farenick}
\email{douglas.farenick@uregina.ca}
\affiliation{Department of Mathematics and Statistics, University of Regina, Regina, Saskatchewan S4S 0A2, Canada}
 
\author{Remus Floricel}
\email{remus.floricel@uregina.ca}
\affiliation{Department of Mathematics and Statistics, University of Regina, Regina, Saskatchewan S4S 0A2, Canada}
 
\author{Sarah Plosker}
\email{ploskers@brandonu.ca}
\affiliation{Department of Mathematics and Statistics, University of Guelph, Guelph, Ontario N1G 2W1, Canada}
  
\date{May 8, 2013}

\begin{abstract}
 A quantum probability measure--or quantum measurement--is said to be clean if
it cannot be irreversibly connected to any other quantum probability measure via a quantum channel. 
The notion of a clean quantum measure was introduced by Buscemi \emph{et al} in \cite{clean2005} for finite-dimensional Hilbert space, and was studied subsequently by Kahn \cite{kahn2007} and 
Pellonp\"a\"a \cite{pellonpaa2011}. The present paper provides new descriptions of clean quantum probability measures in the case of finite-dimensional Hilbert space.
For Hilbert spaces of infinite dimension, we introduce the notion of ``approximately clean quantum probability measures'' and characterise this property for measures whose range determines
a finite-dimensional operator system.\end{abstract}

\pacs{02.30.Cj, 02.50.Cw, 03.65.Aa, 03.67.-a} 

\maketitle
\section*{Introduction}

In this paper we investigate the mathematical ramifications of the notion of \emph{cleanness} 
for positive operator valued measures, which was originally introduced and studied by Buscemi \emph{et al} \cite{clean2005}
in connection with the addition of a preprocessing step to the measurement of a quantum system. A related work of Heinonen \cite{heinonen2005},
which appeared at roughly the same time as \cite{clean2005}, addressed the very general issue of optimality in quantum measurements. Subsequent
studies of clean quantum measurements were undertaken by Kahn \cite{kahn2007} and Pellonp{\"a}{\"a} \cite{pellonpaa2011}.
Pellonp{\"a}{\"a} in fact uses a slightly weaker definition than the original definition of clean quantum measurement put forward in \cite{clean2005}.
To explain this, we first recall some of the standard nomenclature in quantum mechanics. 

A quantum system is represented by a Hilbert space $\H$
and a measurement of the system is represented by a positive operator valued probability measure $\nu:\F(X)\rightarrow\B(\H)$, where
$X$ represents a set of measurement outcomes for the system, $\F(X)$ is a $\sigma$-algebra of measurement events, and $\B(\H)$ is the
space of all bounded linear operators acting on $\H$. The corresponding measurement statistics, which capture
the probability that event $E\in\F(X)$ is measured by the apparatus $\nu$ when the system $\H$ is in state $R$,
are given by the real numbers $\tr_\H(R\nu(E))$.
 
Returning to the ideas introduced in \cite{clean2005}, assume that $\nu$ is a measurement of a quantum system
$\H$. Suppose that a preprocessing step is introduced through the use an irreversible quantum channel $\Phi$ that  
maps the states of $\H$ to states in some other system $\H'$ which is to be
measured by $\nu'$. This preprocessing step is represented mathematically in the Heisenberg picture
by $\nu=\Phi^*\circ\nu'$, where $\Phi^*$ is the dual of $\Phi$, and it
transforms the measurement statistics according to the equation $\tr_{\H}\left(R\,\nu(E)\right)=\tr_{\H'}\left(\Phi(R)\nu'(E)\right)$.
The measurement apparatus $\nu'$ is thought to be cleaner than $\nu$ because of the quantum noise 
introduced by the quantum channel $\Phi$. Put differently, $\nu$ is obtained from a cleaner quantum probability measure $\nu'$ by irreversible preprocessing,
a relation that is denoted by $\nu\lec\nu'$. A
\emph{clean quantum measurement} $\nu$ is one in which there is no $\nu'$ from which $\nu$ is obtained irreversibly by preprocessing.

The authors of \cite{clean2005} were interested primarily in the quantum systems represented by Hilbert spaces of finite dimension; however, in the present paper
we aim to develop mathematical techniques that are suited to the case of infinite-dimensional Hilbert space as well. Although there is precedent for this objective in
the works of Heinonen \cite{heinonen2005}, Kahn \cite{kahn2007}, and Pellonp{\"a}{\"a} \cite{pellonpaa2011}, we allow for an even weaker notion of quantum noise,
and this leads us to a new concept: namely, that of an \emph{approximately clean quantum measurement}. There is a corresponding order relation in
which $\nu \lea \nu'$ denotes that $\nu'$ is approximately cleaner than $\nu$. Such a notion is imposed upon us by the mathematical realities of
infinite-dimensional Hilbert spaces. There is an extensive literature on quantum theory in the context of infinite-dimensional spaces; the monographs of
Davies \cite{Davies-book} and Holevo \cite{Holevo-book2,Holevo-book3} treat the fundamentals of this theory.

The Schr\"odinger--Heisenberg duality manifests itself in our work through the fact that $\B(\H)$ 
is the Banach space dual of the space $\T(\H)$ of trace class operators on $\H$. The key maps of study in quantum information theory, namely quantum channels,
are trace-preserving completely positive linear maps $\Phi:\T(\K)\rightarrow\T(\H)$ (for some Hilbert spaces $\H$ and $\K$). Such $\Phi$ represent the evolution of an
open quantum system in the Schr\"odinger picture. The Banach space adjoint of $\Phi$ is a unital completely positive linear map $\Phi^*:\B(\H)\rightarrow\B(\K)$. Furthermore,
because $\Phi^*$ is a dual map, it is necessarily normal. (See Kraus \cite{kraus1971} or Davies \cite{Davies-book} for an elaboration of these facts.) However,
the set of normal maps is not closed in the topology of interest---namely, the point-ultraweak topology---and for this reason we are led to introduce and make use of what we call
an \emph{approximately normal unital completely positive linear map}.

We end this introduction with a short description of the structure of the paper.
Section \ref{to} is a brief overview of the requisite operator theory and Section \ref{qpm} defines the notions of clean and approximately clean
for quantum probability measures, as well as the order relation $\nu \lea \nu'$. The main results are presented in Section \ref{main results}: Theorem \ref{fd1}, which gives an
analytic description of the order relation $\nu \lea \nu'$ for quantum probability measures, and Theorem \ref{fd2}, which determines the structure of approximately clean
quantum probability measures. The paper concludes with Section \ref{S:remarks}, which contains a number of remarks and observations about the main results.
In particular we show that the relations $\lec$ and $\lea$ are distinct in infinite dimensions, and that there are approximately clean quantum measurements that
are not clean. 

Lastly, because we are using Pellonp\"a\"a's definition of clean quantum probability measure, which is more stringent than the definition used in \cite{clean2005} or \cite{kahn2007}, 
our main theorem (Theorem \ref{fd2}) neither implies nor is implied by the results of \cite{clean2005} or \cite{kahn2007}. 

\section{Operator Theory Background}\label{to}


All Hilbert spaces under consideration are assumed to be separable. If $\H$ and $\K$ are Hilbert spaces, then $\B(\H,\K)$ is the Banach space of all 
bounded linear operators $T:\H\rightarrow\K$. With $\K=\H$, denote $\B(\H,\H)$ by $\B(\H)$, and $\T(\H)$ denotes the ideal of
trace-class operators, where the canonical trace
on $\B(\H)$ is denoted by $\tr(\cdot)$ or by $\tr_\H(\cdot)$ if more than one Hilbert space is being considered. 
The von Neumann algebra $\B(\H)$ is the dual space of $\T(\H)$ in the sense that every 
bounded linear functional $\psi$ on $\T(\H)$ has the form $\psi(R)=\tr(AR)$, $R\in\T(\H)$, for some uniquely determined operator
$A\in\B(\H)$. Thus, every bounded linear operator $\Omega:\T(\H)\rightarrow\T(\H)$ admits an adjoint operator $\Omega^{*}:\B(\H)\rightarrow\B(\H)$
such that, for every $A\in\B(\H)$, the equation $\tr(\Omega(R)A)=\tr(R\Omega^*(A))$ holds for all $R\in\T(\H)$.

Considered as the dual of $\T(\H)$, $\B(\H)$ carries a weak*-topology, which in terms of operator topologies coincides with the 
ultraweak topology (or, $\sigma$-weak topology, as it is often called). Namely, a net $\{X_\alpha\}_{\alpha}\subset\B(\H)$ converges ultraweakly to $X\in\B(\H)$ 
if and only if $\tr(RX)=\lim_\alpha\tr(RX_\alpha)$ for every $R\in\T(H)$. A net $\{\phi_\alpha\}_\alpha$ of bounded linear maps $\phi_\alpha:\B(\H)\rightarrow\B(\H)$
converges to a bounded linear map $\phi:\B(\H)\rightarrow\B(\H)$ in the point-ultraweak topology if, for every $X\in\B(\H)$, $\phi(X)$ 
is the ultraweak limit of the net $\{\phi_\alpha(X)\}_\alpha\subset\B(\H)$. Lastly, the set of all bounded linear maps $\phi:\B(\H)\rightarrow\B(\H)$
of norm $\|\phi\|\leq 1$ is compact in the point-ultraweak topology.  

The cone of all positive $Q\in\B(\H)$ is denoted by $\B(\H)_+$. A positive trace-class operator $R\in\T(\H)$ of trace $\tr(R)=1$ is called a density operator.

\begin{definition} If $Q\in\B(\H)_+$, then $\lambda_{\rm min}(Q)$ and $\lambda_{\rm max}(Q)$ are the nonnegative real numbers
\[
\lambda_{\rm min}(Q)\,=\,\min\{\lambda\,:\,\lambda\in\mbox{\rm Sp}(Q)\}\quad\mbox{and}\quad
\lambda_{\rm max}(Q)\,=\,\max\{\lambda\,:\,\lambda\in\mbox{\rm Sp}(Q)\}\,,
\]
where $\mbox{\rm Sp}(Z)$ denotes the spectrum of an operator $Z\in\B(\H)$.
\end{definition}

\begin{definition}\label{os} {\rm (\cite{choi--effros1977,Paulsen-book})}
An \emph{operator system} is a linear (not necessarily norm-closed) subspace $\oss\subset\B(\H)$ with the properties that
$1\in\oss$ and $S^*\in\oss$ for every $S\in\oss$.
\end{definition}

The operator algebra $\B(\H)$ is an operator system, as is every unital C$^*$-subalgebra $\csta\subset\B(\H)$.

\begin{definition} Assume that  $\oss\subset\B(\H)$ and $\ost\subset\B(\K)$ are operator systems acting on Hilbert space $\H$ and $\K$. A linear map $\phi:\oss\rightarrow\ost$
is \emph{completely positive} if, for every $p\in\mathbb N$, the linear map $\phi^{(p)}:\M_p(\oss)\rightarrow\M_p(\ost)$ in which
\[
\phi^{(p)}\left[S_{ij}\right]_{i,j=1}^p\,=\,\left[\phi(S_{ij})\right]_{i,j=1}^p
\]
has the property of mapping the positive cone $\M_p(\oss)_+$ of $\M_p(\oss)$ into the positive cone $\M_p(\ost)_+$ of $\M_p(\ost)$,
where a $p\times p$ matrix $X$ of operators is positive if $X$ is positive as an operator acting on the Hilbert space $\H^{(p)}=\H\oplus\cdots\oplus\H$.
Furthermore, if a completely positive linear map $\phi:\oss\rightarrow\ost$ is such that $\phi(1)=1$, then $\phi$ is \emph{unital} and $\phi$ is called a \emph{ucp map}.
\end{definition}

Completely positive linear maps of $\B(\H)$ or, more generally, of C$^*$-algebras admit a Stinespring decomposition  \cite[Chapter 4]{Paulsen-book},
which is an extremely important tool by which one studies complete positivity. In contrast, there is no Stinespring decomposition for completely positive maps on operator systems that
are not C$^*$-algebras, which adds a degree of difficulty in working with such structures. 

\begin{definition}  {\rm (\cite{Paulsen-book})}
Two operator systems $\oss$ and $\ost$ are \emph{completely order isomorphic} if there is a linear bijection
$\phi:\oss\rightarrow\ost$ such that $\phi$ and $\phi^{-1}$ are completely positive. If, in addition, $\phi(1)=1$, then 
 $\oss$ and $\ost$ are said to be  \emph{unitally completely order isomorphic}
\end{definition}

The following two definitions are standard, but are included here for completeness.

\begin{definition}
A linear map $\Phi:\T(\H)\rightarrow\T(\K)$ is a \emph{quantum channel} if $\tr_\K\circ\Phi=\tr_\H$ and if $\Phi^{(p)}:\M_p(\T(\H))\rightarrow\M_p(\T(\K))$
maps $\M_p(\T(\H))_+$ into $\M_p(\T(\K))_+$ for every $p\in\mathbb N$.
\end{definition}
  
\begin{definition} A completely positive linear map $\phi:\M\rightarrow\N$ of von Neumann algebras is \emph{normal} if, for
every bounded increasing net $\{H_\alpha\}_{\alpha}\subset\M$ of hermitian operators, $\phi(\sup_\alpha H_\alpha)=\sup_\alpha\phi(H_\alpha)$ in $\N$.
\end{definition}

The adjoint $\Phi^*$ of a quantum channel $\Phi:\T(\H)\rightarrow\T(\K)$ is necessarily
a normal ucp map of $\T(\K)^*=\B(\K)$ into $\T(\H)^*=\B(\H)$ \cite[Chapter 2]{Davies-book}. Hence, using Stinespring's theorem and
our knowledge of the irreducible representations of type I factors, the dual $\Phi^*$ 
of a quantum channel $\Phi:\T(\H)\rightarrow\T(\K)$ can be expressed in a Kraus \cite{kraus1971} decomposition:
\[
\Phi^*(S)\,=\,\sum_k A_k^*SA_k,\quad \forall\,S\in\B(\K)\,,
\]
for some countable set $\{A_k\}_{k}\subset\B(\H,\K)$ for which $\sum_kA_k^*A_k=1\in\B(\H)$. 
(Convergence of sums is with respect to the ultraweak topology of $\B(\H)$.) Consequently,
\[
\Phi(R)\,=\,\sum_k A_kRA_k^*,\quad \forall\,R\in\T(\H)\,.
\]

Normality is an important consideration for our study because a unital completely positive linear map $\phi:\B(\K)\rightarrow\B(\H)$
is the adjoint $\phi=\Phi^*$ of some quantum channel $\Phi:\T(\H)\rightarrow\T(\K)$ if and only if $\phi$ is normal. 
(Of course, if $\dim\K$ is finite,
every completely positive linear map $\B(\K)\rightarrow\B(\H)$ is normal.) We shall also have need of ucp maps that are pointwise-ultraweak limits of normal
ucp maps, and we call such maps approximately normal.

\begin{definition}\label{defn:approx normal} A unital completely positive linear map $\phi:\B(\K)\rightarrow\B(\H)$ is \emph{approximately normal} if there exists a net
$\{\phi_\alpha\}_\alpha$ of ucp maps $\phi_\alpha:\B(\K)\rightarrow\B(\H)$ such that each $\phi_\alpha$ is normal and 
$\phi_\alpha\rightarrow\phi$ in the point-ultraweak topology.
\end{definition}

In finite dimensions, ucp = normal ucp = approximately normal ucp, but in infinite dimensions there is a distinction between all three notions. 
It is rather well known that any ucp map $\phi:\B(\K)\rightarrow\B(\H)$ for which $\phi(K)=0$ for every compact operator $K\in\B(\K)$ cannot be normal. 
In Section \ref{S:remarks}
we give an example of an approximately normal ucp map that is not normal.

Variants of Definition \ref{defn:approx normal} may be used for other classes of maps (such a positive rather than completely positive linear maps) or with other topologies. 
But because of the dual relationship between quantum channels and normal ucp maps, we focus only the class of maps and the topology that are natural from the perspective of duality.

\section{Approximately Clean Quantum Probability Measures}\label{qpm}

Throughout $X$ shall denote a nonempty set and $\F(X)$ will denote a $\sigma$-algebra of subsets of $X$. In the language of probability, 
$X$ is a sample space of measurement outcomes and $\F(X)$ is a $\sigma$-algebra of measurement events. 
If $X$ is a locally compact Hausdorff space, then $\F(X)$ is assumed to be the $\sigma$-algebra of Borel sets of $X$. In particular, if
$X$ is a finite set (endowed the discrete topology), then $\F(X)$ is assumed to be the power set of $X$.

\begin{definition} A map $\nu: \F(X) \to \B(\H)$  is a \emph{positive operator valued probability measure (POVM)}, or a \emph{quantum probability measure}, if
\begin{enumerate}
\item[{(i)}] $\nu(E) \in\B(\H)_+$ for every $E \in \F(X)$,
\item[{(ii)}] $\nu(X)=1\in\B(\H)$, and
\item[{(iii)}] for every countable collection $\{E_k\}_{k \in \mathbb N} \subseteq \F(X)$ with $E_j \cap E_k = \emptyset$ for $j \neq k$ we have
\[
\nu\left(\bigcup_{k\in \mathbb N} E_k \right) = \sum_{k \in \mathbb N} \nu(E_k)\,,
\]
where the convergence on the right side of the equation above is with respect to the ultraweak topology of $\B(\H)$. 
\end{enumerate}
In addition:
\begin{enumerate}
\item[{(iv)}] if $\nu(E\cap F)=\nu(E)\nu(F)$ for all $E,F\in\F(X)$, then $\nu$ is a \emph{projective quantum measure}.
\end{enumerate}
\end{definition}

We are particularly interested in an ultraweakly closed vector space $\T_\nu$ determined by the range $\R_\nu$ of a quantum probability measure $\nu$.
 
\begin{definition} If $\nu$ is a quantum probability measure on $(X,\F(X))$, then
the  \emph{range} of $\nu$ is the set 
\[
\R_\nu\,=\,\{\nu(E)\,:\,E\in\F(X)\}\,\subset\,\B(\H)_+,
\]
and the \emph{measurement space of $\nu$} is the vector space
\[
\T_\nu\,=\,\left(\mbox{\rm Span}_{\mathbb C}\,\R_\nu\right)^{\sigma{\rm-wk}}  \,\subset\,\B(\H)\,,
\]
the ultraweak closure of all linear combinations of operators of the form $\nu(E)$, for $E\in\F(X)$.
\end{definition}

By using the ultraweak closure of $\mbox{\rm Span}_{\mathbb C}\,\R_\nu$, 
the operator system $\T_\nu$ becomes a dual operator system in the sense of \cite{Arveson2007}. In particular, the space $\T_\nu$ can be identified with the dual space of its predual  $\T_{\nu,*}$, that is the Banach space of all weak*-continuous linear functionals on $\T_\nu$. 

 \begin{definition} A \emph{measurement basis} for a quantum probability measure $\nu$ is a finite or countably infinite set $\mathcal B_\nu$ of positive operators 
such that
\begin{enumerate}
\item[{\rm (i)}] $\mathcal B_\nu=\{\nu(E)\,:\,E\in\mathcal F_\nu\}$ for some finite or countable family $\mathcal F_\nu\subset\F(X)$ of pairwise disjoint sets, 
\item[{\rm (ii)}] for every $Z\in \T_\nu$ there exists a unique sequence $\{\alpha_{A,Z}\}_{A\in \B_\nu}$ of complex numbers such that $Z=\sum_{A\in \B_\nu }\alpha_{A,Z}A$ in the weak*-topology,
\item[{\rm (iii)}]  for every $A\in \B_\nu$, the coefficient functional $\varphi_A(Z)=\alpha_{A,Z}$, $Z\in \T_\nu$, is a normal positive linear functional.
\end{enumerate}
If $E_0=X\setminus\left(\bigcup_{E\in\mathcal F_\nu}E\right)$, then the operator $A_0=\nu(E_0)$
is called the \emph{basis residual} for $\mathcal B_\nu$;  if $A_0=0$, then  $\mathcal B_\nu$ is said to admit a \emph{trivial basis residual}.
\end{definition}

Note that $1=A_0+\sum_{A\in \B_\nu } A$, if $\mathcal B_\nu$ is measurement basis for $\nu$.

Our primary concern in this paper is with quantum probability measures for which the measurement space $\ost_\nu$
has finite dimension. The general situation requires a different approach, which will be addressed elsewhere.

\begin{proposition}\label{basispartition} If $\omega$
is a projective measurement such that $\T_\omega$ has finite dimension, then $\T_\omega$ has a measurement basis and
every measurement basis for $\T_\omega$ has trivial residual.
\end{proposition}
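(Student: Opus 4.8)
The plan is to exploit the fact that a projective measure takes as values mutually commuting projections, which forces $\T_\omega$ to be a finite-dimensional commutative C$^*$-subalgebra of $\B(\H)$; the measurement basis is then, up to discarding null sets, the set of minimal projections of $\T_\omega$. First I would record the structural facts. Projectivity gives $\omega(E)^2=\omega(E\cap E)=\omega(E)$ and $\omega(E)\omega(F)=\omega(E\cap F)=\omega(F)\omega(E)$, so $\R_\omega$ consists of mutually commuting projections; moreover $\mbox{\rm Span}_{\mathbb C}\,\R_\omega$ is closed under multiplication (since $\omega(E)\omega(F)=\omega(E\cap F)\in\R_\omega$), is selfadjoint, and contains $1=\omega(X)$, so, being finite-dimensional and hence norm-closed and ultraweakly closed, it already equals $\T_\omega$ and is a finite-dimensional commutative C$^*$-algebra. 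Writing $n=\dim\T_\omega$, let $Q_1,\dots,Q_n$ be the minimal projections of $\T_\omega$: they are pairwise orthogonal, sum to $1$, and form a linear basis of $\T_\omega$.

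The crux is to show that each $Q_i$ is actually a \emph{value} of $\omega$, not merely an element of the span of such values. Here I would use that $\R_\omega$ is closed under the complement $P\mapsto 1-P=\omega(X\setminus E)$ and under the product (meet) of commuting projections, hence is a finite Boolean algebra of projections of $\T_\omega$. The linear span of a finite Boolean algebra of commuting projections equals the span of its atoms, and those atoms are pairwise orthogonal nonzero projections, hence linearly independent; since that span is all of the $n$-dimensional space $\T_\omega$, the Boolean algebra $\R_\omega$ has exactly $n$ atoms, and $n$ pairwise orthogonal nonzero projections in an $n$-dimensional commutative C$^*$-algebra can only be its minimal projections. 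Thus $Q_i=\omega(E_i)$ for suitable $E_i\in\F(X)$. As $\omega(E_i\cap E_j)=Q_iQ_j=0$ for $i\ne j$, replacing $E_i$ by $\widetilde E_i=E_i\setminus\bigcup_{j\ne i}E_j$ produces pairwise disjoint sets with $\omega(\widetilde E_i)=Q_i$ (the discarded part is $\omega$-null, since $\omega(E_i\setminus\widetilde E_i)\le\sum_{j\ne i}\omega(E_i\cap E_j)=0$). Then $\mathcal F_\omega=\{\widetilde E_1,\dots,\widetilde E_n\}$ and $\mathcal B_\omega=\{Q_1,\dots,Q_n\}$ satisfy conditions (i) and (ii) automatically (finite sums, linear independence). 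For condition (iii), given $i$ choose a unit vector $\xi\in Q_i\H$; then $Q_j\xi=Q_jQ_i\xi=\delta_{ij}\xi$, so the coefficient functional $\varphi_{Q_i}$ is the vector state $Z\mapsto\tr\bigl(|\xi\rangle\langle\xi|\,Z\bigr)$ on $\T_\omega$, which is normal and positive. Hence $\T_\omega$ has a measurement basis.

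For the second assertion, let $\mathcal B_\omega$ be any measurement basis, with pairwise disjoint family $\mathcal F_\omega$ and basis residual $A_0=\omega(E_0)$, $E_0=X\setminus\bigcup_{E\in\mathcal F_\omega}E$. By condition (ii) and $\dim\T_\omega<\infty$, $\mathcal B_\omega$ is a finite linear basis of $\T_\omega$, so in particular $0\notin\mathcal B_\omega$; and since distinct elements of $\mathcal B_\omega$ are values of $\omega$ on disjoint sets (equal values on disjoint sets would have to be $0$), the projections in $\mathcal B_\omega$ are pairwise orthogonal. Expanding $1=\sum_{A\in\mathcal B_\omega}\beta_A A$ via (ii) and multiplying by a fixed $A'\in\mathcal B_\omega$ gives $A'=\beta_{A'}A'$, whence $\beta_{A'}=1$ for every $A'$ and so $\sum_{A\in\mathcal B_\omega}A=1$; comparing with $1=\omega(X)=A_0+\sum_{A\in\mathcal B_\omega}A$ yields $A_0=0$.

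The step I expect to be the real obstacle is the passage in the second paragraph from ``$\R_\omega$ linearly spans $\T_\omega$'' to ``the minimal projections of $\T_\omega$ belong to $\R_\omega$''; everything else is bookkeeping once the Boolean-algebra structure of $\R_\omega$—available precisely because $\omega$ is projective—is exploited.
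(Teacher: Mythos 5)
Your proof is correct, but it follows a genuinely different route from the paper's. The paper takes an arbitrary linear basis $\{\omega(F_1),\dots,\omega(F_m)\}$ of $\T_\omega$ and disjointifies the sets iteratively, setting $E_j=F_j\setminus\bigcup_{i<j}F_i$, then argues that the resulting projections $\omega(E_j)$ are nonzero and pairwise orthogonal and invokes a dimension count; the trivial-residual claim is likewise handled by orthogonality plus a dimension count. You instead observe that $\mathrm{Span}\,\R_\omega=\T_\omega$ is a finite-dimensional commutative C$^*$-algebra, that $\R_\omega$ is a finite Boolean algebra of projections, and that its atoms are precisely the $n=\dim\T_\omega$ minimal projections of $\T_\omega$, so each minimal projection is itself a value of $\omega$; the disjointification and the residual computation then become routine algebra. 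Your approach costs a little more machinery (Gelfand/Boolean structure) but buys three things: it identifies the measurement basis canonically as the set of minimal projections of $\T_\omega$; it explicitly verifies condition (iii) of the definition of measurement basis (normality and positivity of the coefficient functionals), which the paper passes over; and it sidesteps the one delicate point in the paper's argument, namely that the disjointified projections $\omega\bigl(F_j\setminus\bigcup_{i<j}F_i\bigr)$ are nonzero --- an assertion that does not follow merely from nonemptiness of the sets and can in fact fail for a badly chosen spanning family (e.g.\ the indicators of $\{1,2\},\{2,3\},\{1,3\}$ in $\mathbb C^3$ are linearly independent projections with $\{1,3\}\setminus(\{1,2\}\cup\{2,3\})=\emptyset$). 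Your atom-based construction produces nonzero, pairwise orthogonal values of $\omega$ on disjoint sets automatically, so on this point your argument is the more robust of the two.
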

 
\begin{proof} Assume that $\omega:\F(X)\rightarrow \B(\K)$ is a projective measurement 
and that $\{\omega(F_1),\dots,\omega(F_m)\}$ is a linear basis for $\T_\omega$. Set $Q_j=\omega(F_j)$ for each $j$; thus,
$Q_1,\dots,Q_m$ are linearly independent (pairwise commuting) projections.
Let $E_1=F_1$ and define, iteratively,
\[
E_j \,=\, F_j\setminus \left(\bigcup_{i=1}^{j-1} F_i\right)\,,\quad j=2,\dots,m\,.
\]
The sets $E_1,\dots,E_m$ are pairwise disjoint and nonempty, and therefore the projections $P_1,\dots,P_m$, where
each $P_j=\omega(E_j)$, are nonzero and pairwise orthogonal. Thus, $\{P_1,\dots,P_m\}$ is a set of $m$ linearly independent
operators whose linear span is a subspace of the $m$-dimensional space $\T_\omega$. Hence, $\{P_1,\dots,P_m\}$ is
a measurement basis for $\T_\omega$. Note that $P_0=1-\sum_{j=1}^mP_j$ is orthogonal to each $P_j$, and thus $P_0$
is either zero or is linearly independent of $P_1,\dots,P_m$. But $m=\mbox{dim}\T_\omega$ yields $P_0=0$,
showing that the measurement basis $\B_\omega=\{P_1,\dots,P_m\}$ for $\omega$ has trivial residual. Indeed this latter argument shows that 
every measurement basis for $\T_\omega$ has trivial residual.
\end{proof}

The decomposition of $X$ into a finite disjoint-union of measurable sets $E_0,\dots, E_m$ in the proof of Proposition \ref{basispartition} is an idea that
appears in other works on quantum measurement--for example, as in the decomposition
of phase space into cells given
in \cite[(10.19)]{Bengtsson--Zyczkowski-book}.

Pellonp\"a\"a's definition of clean quantum probability measure is stated below.

\begin{definition}\label{clean order} {\rm (\cite{pellonpaa2011})}
Assume that $\nu_1$ and $ \nu_2$ are quantum probability measures on $(X,\F(X))$ with values in $\B(\H_1)$ and $\B(\H_2)$
respectively.
\begin{enumerate}
\item $\nu_1$ is \emph{cleaner than} $\nu_2$, denoted by
$\nu_2 \ll_{\rm cl} \nu_1$, if $\nu_2=\Phi^*\circ\nu_1$ for some quantum
channel $\Phi:\T(\H_2)\rightarrow\T(\H_1)$.
\item $\nu_1$ and $\nu_2$ are \emph{cleanly equivalent}, denoted by $\nu_2 \ce \nu_1$, if
$\nu_1 \ll_{\rm cl} \nu_2$ and $\nu_2 \ll_{\rm cl} \nu_1$.
\item $\nu_1$ is \emph{clean} if $\nu_2\ll_{\rm cl}\nu_1$ for every quantum probability measure $\nu_2$ satisfying $\nu_1\ll_{\rm cl}\nu_2$.
\end{enumerate}
\end{definition}

As noted in the Introduction, if $\nu_2\lec\nu_1$ via a quantum channel $\Phi$, then the measurement statistics satisfy
\[
\tr_{\H_2}\left(R\,\nu_2(E)\right)\,=\,\tr_{\H_1}\left(\Phi(R)\nu_1(E)\right),\quad\forall\,R\in\T(\H_2),\;E\in\F(X)\,.
\]

As in classical probability, another of the most basic of partial orderings is the partial order that arises from absolute continuity.

\begin{definition} If $\nu_j:\F(X)\rightarrow\B(\H_j)$ is a quantum probability measure on $(X,\F(X))$ for $j=1,2$, then 
$\nu_2$ is \emph{absolutely continuous with respect to} $\nu_1$,
denoted by $\nu_2\leac\nu_1$, if $\nu_2(E)=0$ for all $E\in\F(X)$ for which $\nu_1(E)=0$.
\end{definition}

There is a Radon-Nikod\'ym derivative that arises from the ordering $\nu_2\leac\nu_1$; see \cite{farenick--plosker--smith2011} for further information.
Pellonp\"a\"a's result below shows that for projective measures the clean ordering is the same as the ordering by absolute continuity.

\begin{theorem}\label{p thm} {\rm (Pellonp\"a\"a \cite[Theorem 3]{pellonpaa2011})} Every projective measurement is clean and
the following statements are equivalent for quantum probability measures $\omega$ and $\nu$ on 
$(X,\F(X))$, where $\omega$ is projective:
\begin{enumerate}
\item $\nu\leac\omega$;
\item $\nu \ll_{\rm cl} \omega$.
\end{enumerate}
\end{theorem}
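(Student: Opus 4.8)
The plan has three pieces, of which the first is routine. The implication (2)$\Rightarrow$(1) needs nothing about $\omega$: if $\nu=\Phi^*\circ\omega$ for a quantum channel $\Phi$ and $\omega(E)=0$, then $\nu(E)=\Phi^*(0)=0$, so $\nu\leac\omega$. All the content lies in (1)$\Rightarrow$(2) and in the cleanness assertion, and the structural input for both is that a projective measurement generates an abelian von Neumann algebra $\style M_\omega\subset\B(\H_\omega)$, together with the classical multiplicity theory for abelian von Neumann algebras on separable Hilbert spaces.

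For (1)$\Rightarrow$(2), assume $\nu\leac\omega$. First I would replace $\nu$ by a minimal Naimark dilation: there are a projection-valued measure $\hat\nu$ on a separable space $\widehat\H$ and an isometry $W:\H_\nu\to\widehat\H$ with $\nu(E)=W^*\hat\nu(E)W$, and minimality forces $\hat\nu(E)=0$ exactly when $\nu(E)=0$, so $\nu\leac\omega$ upgrades to $\hat\nu\leac\omega$. Now $\hat\nu$ and $\omega$ are both projection-valued, and $\hat\nu\leac\omega$ says exactly that the scalar spectral measure class of $\hat\nu$ is dominated by that of $\omega$. Since the ampliation $\omega(\cdot)\otimes 1_{\ell^2}$ has multiplicity $\aleph_0$ everywhere on the support of $\omega$, multiplicity theory shows $\hat\nu$ is unitarily equivalent to a subrepresentation of it: there is an isometry $Y:\widehat\H\to\H_\omega\otimes\ell^2$ whose range is invariant for $\omega(\cdot)\otimes 1_{\ell^2}$ and with $\hat\nu(E)=Y^*\bigl(\omega(E)\otimes 1_{\ell^2}\bigr)Y$. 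Composing gives $\nu(E)=(YW)^*\bigl(\omega(E)\otimes 1_{\ell^2}\bigr)(YW)$, so $\Phi^*(S):=(YW)^*\bigl(S\otimes 1_{\ell^2}\bigr)(YW)$, $S\in\B(\H_\omega)$, is a normal unital completely positive map with $\Phi^*\circ\omega=\nu$ (unitality is automatic because $YW$ is an isometry). As $\Phi^*$ is normal, it is the Banach-space adjoint of a quantum channel $\Phi:\T(\H_\nu)\to\T(\H_\omega)$, whence $\nu\ll_{\rm cl}\omega$.

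For cleanness, suppose $\omega$ is projective and $\omega\ll_{\rm cl}\nu$, say $\omega=\Psi^*\circ\nu$ with $\Psi$ a channel; the task is to produce a channel in the reverse direction. Because $0\le\nu(E)\le 1$ while $\omega(E)=\Psi^*(\nu(E))$ is a projection, the Kadison--Schwarz inequality forces $\omega(E)=\Psi^*(\nu(E))^2\le\Psi^*(\nu(E)^2)\le\Psi^*(\nu(E))=\omega(E)$, hence $\Psi^*(\nu(E)^2)=\Psi^*(\nu(E))^2$; so every $\nu(E)$ lies in the multiplicative domain of $\Psi^*$, on which $\Psi^*$ restricts to a normal $*$-homomorphism sending $\nu(E)$ to $\omega(E)$. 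Passing again to the minimal Naimark dilation of $\nu$, one reduces to two projection-valued measures intertwined by a normal unital completely positive map, establishes $\nu\leac\omega$, and invokes (1)$\Rightarrow$(2) to get $\nu\ll_{\rm cl}\omega$, so $\omega$ is clean. (The ancillary fact that two projection-valued measures with the same null ideal are cleanly equivalent follows from the compression-of-ampliation device of the previous paragraph, used in both directions.)

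The step I expect to be the main obstacle in (1)$\Rightarrow$(2) is recognising that $\Phi^*$ cannot be obtained by extending the evident normal map $\style M_\omega\to\B(\H_\nu)$ along a normal conditional expectation $\B(\H_\omega)\to\style M_\omega$: no such expectation need exist, since an abelian von Neumann algebra need not be the range of a normal conditional expectation (for instance $L^\infty[0,1]\subset\B(L^2[0,1])$ is not), which is why the argument must go through the Naimark dilation and the multiplicity comparison---and verifying that the measure-class and multiplicity inequalities really do hold for $\hat\nu$ against $\omega(\cdot)\otimes 1_{\ell^2}$ is where the care is needed. In the cleanness argument the delicate point is extracting $\nu\leac\omega$ from the multiplicative-domain structure; this also relies on knowing that the multiplicative domain of a normal unital completely positive map is an ultraweakly closed $*$-subalgebra containing $\R_\nu$, and on carrying the minimal Naimark dilation through without destroying normality.
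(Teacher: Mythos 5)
The paper offers no proof of this statement---it is quoted verbatim from Pellonp\"a\"a \cite[Theorem 3]{pellonpaa2011}---so I can only judge your argument on its own merits. The equivalence (1)$\Leftrightarrow$(2) is handled correctly. The direction (2)$\Rightarrow$(1) is as trivial as you say. For (1)$\Rightarrow$(2), your chain works: minimality of the Naimark dilation does force $\hat\nu(E)=0$ exactly when $\nu(E)=0$ (since $\|\hat\nu(E\cap F)W\xi\|^2=\langle\nu(E\cap F)\xi,\xi\rangle$ and $\nu(E\cap F)\le\nu(E)$); fixing a scalar measure $\lambda$ with the same null sets as $\omega$, both $\omega$ and $\hat\nu$ integrate to normal unital representations of $L^\infty(X,\lambda)$, the former faithful, and every normal representation on a separable Hilbert space is a compression of the countable ampliation of a faithful one by a projection in the commutant---this yields your isometry $Y$, and $(YW)^*(\,\cdot\,\otimes 1)(YW)$ is a normal ucp map, hence the adjoint of a channel. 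Your closing observation that one cannot shortcut this via a normal conditional expectation onto the abelian algebra is also correct and well taken.

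The cleanness assertion is where there is a genuine gap, and it sits exactly at the step you gloss as ``establishes $\nu\leac\omega$.'' The multiplicative-domain computation is fine: each $\nu(E)$ lies in the multiplicative domain of $\Psi^*$, so $\Psi^*$ restricts to a normal $*$-homomorphism $\pi$ on the von Neumann algebra $\mathcal D$ generated by $\R_\nu$ with $\pi(\nu(E))=\omega(E)$. But $\ker\pi=z\mathcal D$ for a central projection $z$, and nothing forces $z=0$, which is precisely what $\omega(E)=0\Rightarrow\nu(E)=0$ would require. In fact, with the definitions as stated in this paper the implication $\omega\lec\nu\Rightarrow\nu\leac\omega$ fails for degenerate projective measures: take $X=\{1,2\}$, $\omega(\{1\})=1$ and $\omega(\{2\})=0$ on $\H$, and $\nu(\{1\})=P$, $\nu(\{2\})=1-P$ on $\K$ with $P$ a projection, $0\neq P\neq 1$. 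The channel $\Phi(R)=\tr(R)\sigma$, with $\sigma$ a state supported in the range of $P$, gives $\Phi^*(A)=\tr(\sigma A)1$ and hence $\omega=\Phi^*\circ\nu$, so $\omega\lec\nu$; yet no positive linear map can send $\omega(\{2\})=0$ to $1-P\neq 0$, so $\nu\not\lec\omega$ and this $\omega$ is not clean. So the first assertion of the theorem, read literally against Definition \ref{clean order}, requires a nondegeneracy hypothesis (presumably implicit in Pellonp\"a\"a's original formulation), and your strategy cannot close without assuming, rather than deriving, whatever condition forces $z=0$.
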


We now introduce a new form of dominance for quantum probability measures.

\begin{definition}\label{approx clean order} 
Assume that $\nu_1$ and $ \nu_2$ are quantum probability measures on $(X,\F(X))$ with values in $\B(\H_1)$ and $\B(\H_2)$
respectively.
\begin{enumerate}
\item $\nu_1$ is \emph{approximately cleaner} than $\nu_2$, denoted by
$\nu_2 \lea \nu_1$, if $\nu_2=\phi\circ\nu_1$ for some approximately normal ucp map $\phi:\B(\H_1)\rightarrow\B(\H_2)$.
\item $\nu_1$ and $\nu_2$ are \emph{approximately cleanly equivalent}, denoted by $\nu_2 \ace \nu_1$, if
$\nu_1 \lea \nu_2$ and $\nu_2\lea \nu_1$.
\item $\nu_1$ is \emph{approximately clean} if $\nu_2\lea\nu_1$ for every quantum probability measure $\nu_2$ satisfying $\nu_1\lea\nu_2$.
\end{enumerate}
\end{definition}

If $\nu_2 \lea \nu_1$ via an approximately normal ucp map $\phi$, then the corresponding measurement statistics are related by
\[
\tr_{\H_2}\left(R\,\nu_2(E)\right)\,=\,\lim_\alpha\tr_{\H_1}\left(\Phi_\alpha(R)\nu_1(E)\right)\,\quad\forall\,R\in\T(\H_2),\;E\in\F(X)\,,
\]
where $\{\Phi_\alpha\}_\alpha$ is a net of quantum channels $\T(\H_2)\rightarrow\T(\H_1)$
for which $\Phi_\alpha^*\rightarrow\phi$ in the point-ultraweak topology.

\section{Structure of Approximately Clean Quantum Probability Measures with Finite-Dimensional Measurement Spaces}\label{main results}

Our first objective is to characterise the order relation $\nu_2 \lea \nu_1$ in the case where the measurement space $\T_{\nu_1}$ has finite dimension.

\begin{theorem}\label{fd1} If $\B_\nu=\{\nu(E_1),\dots,\nu(E_m)\}$ 
is a finite measurement basis for a quantum probability measure $\nu:\F(X)\rightarrow\B(\H)$, 
then the following statements are equivalent for a quantum probability measure $\nu':\F(X)\rightarrow\B(\H')$:
\begin{enumerate}
\item\label{fd1-1} $\nu' \lea \nu$;
\item\label{fd1-2} $\nu'=\psi\circ\nu$ for some ucp map $\psi:\B(\H)\rightarrow\B(\H')$;
\item\label{fd1-3} for all $L_0,\dots,L_m\in\M_p(\mathbb C)$ and every $p\in\mathbb N$,
\begin{equation}\label{basic ineq}
\left\|  \sum_{j=0}^m \nu'(E_j)\otimes L_j\right\| \leq \left\|  \sum_{j=0}^m \nu(E_j)\otimes L_j\right\|\,.
\end{equation}
\end{enumerate}
\end{theorem}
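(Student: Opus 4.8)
The plan is to prove the implications $(\ref{fd1-1})\Rightarrow(\ref{fd1-2})\Rightarrow(\ref{fd1-3})$, then $(\ref{fd1-3})\Rightarrow(\ref{fd1-2})$, and finally $(\ref{fd1-2})\Rightarrow(\ref{fd1-1})$, writing $A_j=\nu(E_j)$ and $A_j'=\nu'(E_j)$ for $0\le j\le m$ (so that $\sum_{j=0}^mA_j=1=\sum_{j=0}^mA_j'$), and using throughout that $\T_\nu$ is the $m$-dimensional operator system with linear basis $\{A_1,\dots,A_m\}$. The implication $(\ref{fd1-1})\Rightarrow(\ref{fd1-2})$ is immediate from Definition~\ref{defn:approx normal}, since an approximately normal ucp map is in particular a ucp map. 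For $(\ref{fd1-2})\Rightarrow(\ref{fd1-3})$: if $\nu'=\psi\circ\nu$ for a ucp map $\psi\colon\B(\H)\to\B(\H')$, then for each $p$ the amplification $\psi^{(p)}\colon\M_p(\B(\H))\to\M_p(\B(\H'))$ is unital and completely positive, hence contractive; applying it to $\sum_{j=0}^mA_j\otimes L_j$ and using $\psi(A_j)=A_j'$ yields~(\ref{basic ineq}).

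For $(\ref{fd1-3})\Rightarrow(\ref{fd1-2})$, I would first define $\psi_0\colon\T_\nu\to\B(\H')$ to be the unique linear map with $\psi_0(A_j)=A_j'$ for $1\le j\le m$; this is well defined since $\{A_1,\dots,A_m\}$ is a basis of $\T_\nu$, and $\psi_0(1)=1$ because $\psi_0(A_0)=1-\sum_{j=1}^mA_j'=A_0'$. The next step is to show that~(\ref{basic ineq}) is exactly the statement that $\psi_0$ is completely positive. For this I would use the elementary observation that a self-adjoint operator $S$ with $r=\|S\|$ satisfies $S\ge 0$ if and only if $\|r\cdot1-S\|\le r$: writing a positive element of $\M_p(\T_\nu)$ as $S=\sum_{j=1}^mA_j\otimes B_j$ and setting $D_0=r1_p$ and $D_j=r1_p-B_j$, one has $r\cdot1-S=\sum_{j=0}^mA_j\otimes D_j$ with $\bigl\|\sum_{j=0}^mA_j\otimes D_j\bigr\|\le r$, so~(\ref{basic ineq}) gives $\|r\cdot1-\psi_0^{(p)}(S)\|\le r$ and hence $\psi_0^{(p)}(S)\ge0$; the reverse direction is the estimate of the previous paragraph restricted to $\T_\nu$. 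Thus~(\ref{basic ineq}) forces $\psi_0$ to be a ucp map on $\T_\nu$, and by Arveson's extension theorem $\psi_0$ extends to a ucp map $\psi\colon\B(\H)\to\B(\H')$.

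It remains, for $(\ref{fd1-3})\Rightarrow(\ref{fd1-2})$, to verify that this $\psi$ actually satisfies $\nu'=\psi\circ\nu$ on all of $\F(X)$, and this is where the hypothesis that $\B_\nu$ is a \emph{measurement basis} does the work. The key point is the rigidity it imposes: for $1\le j\le m$ and any $E\in\F(X)$, the operators $\nu(E\cap E_j)$ and $\nu(E_j\setminus E)$ are positive elements of $\T_\nu$ whose $\B_\nu$-expansions have nonnegative coefficients (positivity of the coefficient functionals $\varphi_{A_k}$) and whose sum is $A_j$; matching coefficients and invoking uniqueness of the expansion forces $\nu(E\cap E_j)=t_j(E)\,A_j$ for some $t_j(E)\in[0,1]$. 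The plan is then to decompose $E=\bigsqcup_{j=0}^m(E\cap E_j)$ and to compare the two positive operators $\nu'(E)$ and $\psi(\nu(E))=\psi_0(\nu(E))$, using the complete positivity of $\psi_0$ together with the analogous refinement applied to the complement $X\setminus E$, whose $\nu'$- and $\psi_0\circ\nu$-values sum to $1$ on each side; a squeeze then yields $\nu'(E)=\psi_0(\nu(E))$, i.e.\ $\nu'=\psi\circ\nu$, which is~(\ref{fd1-2}).

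Finally, for $(\ref{fd1-2})\Rightarrow(\ref{fd1-1})$: since $\psi$ enters $\nu'=\psi\circ\nu$ only through its restriction to the finite-dimensional space $\T_\nu$, it suffices to produce an \emph{approximately normal} ucp map $\phi\colon\B(\H)\to\B(\H')$ agreeing with $\psi$ on $\T_\nu$. I would build $\phi$ from a Stinespring dilation $\psi(X)=V^*\pi(X)V$, splitting $\pi$ into its normal part (a multiple of the identity representation, whose compression is already normal) and its singular part, and approximating the singular part in the point-ultraweak topology by normal ucp maps via the weak*-density of the vector states of $\B(\H)$ in its state space, arranged so as not to perturb the finitely many values $\psi(A_1),\dots,\psi(A_m)$; the limiting ucp map agrees with $\psi_0$ on $\T_\nu$, so $\nu'\lea\nu$. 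I expect the main obstacle to be the global-compatibility step in $(\ref{fd1-3})\Rightarrow(\ref{fd1-2})$: turning a norm inequality that only constrains $\nu'$ on $E_0,\dots,E_m$ into a statement about $\nu'(E)$ for every $E\in\F(X)$. Everything there rests on the rigidity of a measurement basis, and isolating exactly the right form of that rigidity---with correct bookkeeping for the basis residual $A_0$---is the delicate part; a secondary technical point is arranging the normal approximation in $(\ref{fd1-2})\Rightarrow(\ref{fd1-1})$ without disturbing the prescribed values on $\T_\nu$.
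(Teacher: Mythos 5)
Your implications \eqref{fd1-1}$\Rightarrow$\eqref{fd1-2} and \eqref{fd1-2}$\Rightarrow$\eqref{fd1-3} coincide with the paper's, and your verification that \eqref{basic ineq} makes $\psi_0$ completely positive on $\T_\nu$ (via the criterion that a self-adjoint $T$ with $\|r1-T\|\le r$ is positive) is a correct and slightly more self-contained substitute for the paper's ``unital completely contractive, hence ucp'' step. Note only that your assertion $\psi_0(A_0)=A_0'$ is not automatic from the definition of $\psi_0$: it follows from \eqref{basic ineq} with $p=1$, $L_0=1$ and $L_j=-a_j$, where $A_0=\sum_j a_jA_j$, which forces the same linear dependence among the $A_j'$. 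The first serious gap is your step \eqref{fd1-2}$\Rightarrow$\eqref{fd1-1}, which is where all the analytic content of the theorem sits. The paper proves \eqref{fd1-3}$\Rightarrow$\eqref{fd1-1} directly: it replaces $\H$ by $\H\otimes\ell^2(\mathbb N)$ precisely so that $\cstar(\X_\nu)$ contains no nonzero compact operators, extends $\psi_0$ by Arveson, and then invokes Voiculescu's theorem to approximate the extension \emph{in norm} on that C$^*$-algebra by compressions $T\mapsto V^*V_n^*TV_nV$; these are normal ucp maps, and a point-ultraweak cluster point of them is the required approximately normal $\phi$ with $\phi(A_j)=B_j$. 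Your alternative---split the Stinespring dilation into normal and singular parts and approximate the singular part by normal ucp maps using weak$^*$-density of vector states---amounts to the claim that every ucp map of $\B(\H)$ into $\B(\H')$ is approximately normal, which you do not prove: density of normal \emph{states} does not by itself produce $\B(\H')$-valued, unital, normal approximants converging at every $X\in\B(\H)$. Some Weyl--von Neumann/Voiculescu input, applied after removing the compacts by inflation, is exactly what is needed here and is absent from your sketch.

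Second, you correctly isolate the global-compatibility problem: \eqref{basic ineq} constrains $\nu'$ only on $E_0,\dots,E_m$, while \eqref{fd1-2} is an identity on all of $\F(X)$. But your proposed repair cannot succeed. The rigidity you extract ($\nu(E\cap E_j)=t_j(E)A_j$) is a statement about $\nu$ alone; nothing in \eqref{fd1-3} controls how $\nu'$ distributes $\nu'(E_j)$ over the subsets of $E_j$, and the ``squeeze'' would require one-sided operator inequalities such as $\nu'(E)\le\psi(\nu(E))$ that do not follow from anything. Indeed the implication fails as literally stated: take $X=\{1,2,3\}$, $\H=\H'=\mathbb C^2$, $\nu(\{1\})=\nu'(\{1\})=\operatorname{diag}(1,0)$, $\nu(\{2\})=\nu(\{3\})=\operatorname{diag}(0,\tfrac12)$, $\nu'(\{2\})=\operatorname{diag}(0,\tfrac13)$, $\nu'(\{3\})=\operatorname{diag}(0,\tfrac23)$, with $E_1=\{1\}$ and $E_2=\{2,3\}$. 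Then $\B_\nu=\{\nu(E_1),\nu(E_2)\}$ is a measurement basis, $\nu'(E_j)=\nu(E_j)$ so \eqref{basic ineq} holds with equality, yet no linear map can send $\nu(\{2\})=\nu(\{3\})$ to the two distinct operators $\nu'(\{2\})\neq\nu'(\{3\})$. What \eqref{fd1-3} actually yields---and all that the paper's own proof establishes, since it too stops at ``$B_j=\phi(A_j)$ for each $j$''---is an approximately normal ucp map $\phi$ with $\nu'(E_j)=\phi(\nu(E_j))$ for $j=0,\dots,m$ only. So you have put your finger on a genuine defect, but it is a defect in the statement (or an implicit restriction to the $\sigma$-algebra generated by $E_0,\dots,E_m$), not something a cleverer squeeze will close.
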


\begin{proof}
\eqref{fd1-1} $\Rightarrow$ \eqref{fd1-2}. 
If $\nu' \lea \nu$, then $\nu'=\phi\circ\nu$ for some approximately normal ucp map $\phi$. Take $\psi=\phi$.

\eqref{fd1-2} $\Rightarrow$ \eqref{fd1-3}. Because $\nu'=\psi\circ\nu$ for some ucp map $\psi:\B(\H)\rightarrow\B(\H')$, and
because ucp maps
are completely contractive, 
$\left\|  \sum_{j=0}^m \psi(\nu(E_j))\otimes L_j\right\| \leq \left\|  \sum_{j=0}^m \nu(E_j)\otimes L_j\right\|$ for all 
$L_0,\dots,L_m\in\M_p(\mathbb C)$ and every $p\in\mathbb N$.
That is, inequality \eqref{basic ineq} holds.

\eqref{fd1-3} $\Rightarrow$ \eqref{fd1-1}. Assume that 
for all $L_0,\dots,L_m\in\M_p(\mathbb C)$ and every $p\in\mathbb N$, inequality \eqref{basic ineq} holds.
Let $\H^{(\infty)}=\H\otimes\ell^2(\mathbb N)$ and $\H'{}^{(\infty)}=\H'\otimes\ell^2(\mathbb N)$. Because $\H^{(\infty)}$ and 
$\H'{}^{(\infty)}$ are infinite-dimensional separable Hilbert spaces,
they are isomorphic and we may identify them; that is, we assume $\H^{(\infty)}=\H'{}^{(\infty)}$ without loss of generality.
Let $A_j=\nu(E_j)$ and $B_j=\nu'(E_j)$, for $j=0,\dots,m$,
and let $\pi:\B(\H)\rightarrow\B(\H^{(\infty)})$ be given by $\pi(Z)=Z\otimes 1$.  Define $\pi':\B(\H')\rightarrow \B(\H^{(\infty)})$ analogously.
Thus, $\pi$ and $\pi'$ a normal, unital homomorphisms. Set $Z^{(\infty)}=\pi(Z)$ and $Y^{(\infty)}=\pi'(Y)$, for every $Z\in\B(\H)$ and $Y\in\B(\H')$, 
and note that inequality \eqref{basic ineq} implies
\begin{equation}\label{basic2}
\left\|  \sum_{j=0}^m B_j^{(\infty)}\otimes L_j\right\| \leq \left\|   \sum_{j=0}^m A_j^{(\infty)}\otimes L_j\right\|\,,
\end{equation}
for all $L_0,\dots,L_m\in\M_p(\mathbb C)$ and every $p\in\mathbb N$.

Consider now the operator systems $\X_\nu=\mbox{Span}\,\{A_1^{(\infty)},\dots, A_m^{(\infty)}\}$ and 
$\X_{\nu'}=\mbox{Span}\,\{B_1^{(\infty)},\dots, B_m^{(\infty)}\}$, and let
$\psi_0:\X_\nu\rightarrow\X_{\nu'}$ denote the linear function defined by 
\[
\psi_0\left( \sum_{j=1}^m \alpha_j A_j^{(\infty)}\right)\,=\,\sum_{j=1}^m\alpha_j B_j^{(\infty)}\,,
\]
for $\alpha_1,\dots,\alpha_m\in\mathbb C$. 
Inequality \eqref{basic2} shows that the unital linear map $\psi_0$ is completely contractive. Hence, 
by Arveson's extension theorem \cite{Paulsen-book}, $\psi_0$ 
has a completely contractive extension to a ucp map $\psi:\B(\H^{(\infty)})\rightarrow\B(\H^{(\infty)})$, which maps 
the operator system $\X_\nu$ into $\X_{\nu'}$. 

Let $\csta=\cstar(\X_\nu)$, which is a separable unital C$^*$-algebra, and let $\theta=\psi_{\vert\csta}$.
Because the C$^*$-algebra $\csta$ contains no nonzero compact operators, Voiculescu's theorem \cite[II.5]{Davidson-book} 
implies that there is a sequence of isometries $V_n\in \B(\H^{(\infty)})$ such that
$\lim_n\|\theta(T)-V_n^*TV_n\|=0$ for every $T\in\csta$. If $\{\omega_j\}_j$ and $\{e_k\}_{k\in\mathbb N}$ 
are orthonormal bases of $\H$ and $\ell^2(\mathbb N)$, then the
map $V:\H'\rightarrow\H^{(\infty)}$ in which $V\omega_j=\omega_j\otimes e_1$, for all $j$, is an isometry. 
Thus, for each $n$, the map $\phi_n:\B(\H)\rightarrow\B(\H)$ given by
$\phi_n(Z)=V^*V_n^*Z^{(\infty)}V_nV$ is a normal ucp map such that $\lim_n\|B_j-\phi_n(A_j)\|=0$ for all $j=1,\dots,m$.
 Since the space of all ucp maps $\B(\H)\rightarrow\B(\H')$ is compact in the point-ultraweak topology, 
the sequence $\{\phi_n\}_{n}$ admits a cluster point $\phi$, and so there is a subsequence, 
which for notational simplicity we denote again by $\{\phi_n\}_{n}$, such that $\phi(Z)$ is the ultraweak limit of $\{\phi_n(Z)\}_n$ for every $Z\in\B(\H)$, and such that 
$B_j=\phi(A_j)$ for each $j=1,\dots,m$.
\end{proof}

Our second objective is to characterise approximately clean quantum probability measures that have finite-dimensional measurement spaces.
To do so, we begin with a useful preliminary result (Lemma \ref{upper bound}) related to the inequalities in statement \eqref{fd1-3} of Theorem \ref{fd1} above.

\begin{definition} If $A_1,\dots, A_m$ are pairwise commuting positive operators such that $\sum_{j=1}^m A_j=1$, then the
\emph{joint spectrum} of the $m$-tuple $(A_1,\dots,A_m)$ is the subset
$\spec(A_1,\dots,A_m)\subset\mathbb R^m$ 
of all $\lambda\in\mathbb R^m$ for which there exists a unital homomorphism $\rho:\cstar(A_1,\dots,A_m)\rightarrow\mathbb C$
such that $\lambda_j=\rho(A_j)$ for all
$j=1,\dots,m$, where $\cstar(A_1,\dots,A_m)$ is the unital abelian C$^*$-algebra generated by $A_1,\dots,A_m$.
\end{definition}

\begin{lemma}\label{upper bound} If $A_1,\dots,A_m\in\B(\H)$ are positive operators such that $\sum_{j=1}^mA_j=1$, then for every Hilbert space $\K$
and operators $L_1,\dots, L_m\in\B(\K)$ we have
\[
\left\| \sum_{j=1}^m A_j\otimes L_j \right\|\,\leq\,\max_{1\leq j\leq m}\|L_j\|\,.
\]
If, moreover, $A_1,\dots, A_m$ are 
pairwise commuting with joint spectrum $\Lambda\in\mathbb R^m$, then
\[
\left\| \sum_{j=1}^m A_j\otimes L_j \right\|\,=\,\displaystyle\max_{\lambda\in \Lambda } \left\| \displaystyle\sum_{j=1}^m\lambda_jL_j\right\| \,.
\]
In particular, if $A_1,\dots, A_m$ are nonzero
projections, then 
\[
\left\| \sum_{j=1}^m A_j\otimes L_j \right\|\,=\,\displaystyle\max_{1\leq j\leq m}\|L_j\|\,.
\]
\end{lemma}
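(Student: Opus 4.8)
The plan is to prove the three assertions in turn, using a dilation identity for the norm bound and Gelfand theory for the two commuting cases.

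\textbf{Step 1 (the inequality).} I would exhibit $\sum_{j=1}^m A_j\otimes L_j$ as a compression of a block-diagonal operator. Since each $A_j$ is positive and $\sum_jA_j=1$, the map $V:\H\rightarrow\H^{(m)}=\H\oplus\cdots\oplus\H$ given by $V\xi=(A_1^{1/2}\xi,\dots,A_m^{1/2}\xi)$ satisfies $V^*V=\sum_jA_j=1$, so $V$ is an isometry; hence so is $W=V\otimes 1_\K$. A direct computation, using $A_j^{1/2}A_j^{1/2}=A_j$, shows
\[
\sum_{j=1}^m A_j\otimes L_j\,=\,W^*\Bigl(\bigoplus_{j=1}^m(1_\H\otimes L_j)\Bigr)W\,,
\]
and since $\|W\|=\|W^*\|=1$, the right-hand side has norm at most $\bigl\|\bigoplus_j(1_\H\otimes L_j)\bigr\|=\max_j\|L_j\|$. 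This is the first inequality.

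\textbf{Step 2 (the commuting case).} Let $\mathcal C=\cstar(A_1,\dots,A_m)\subset\B(\H)$; this is unital (because $1=\sum_jA_j\in\mathcal C$) and abelian. The character space of $\mathcal C$ is carried onto the joint spectrum $\Lambda=\spec(A_1,\dots,A_m)$ by $\rho\mapsto(\rho(A_1),\dots,\rho(A_m))$, and since $A_1,\dots,A_m$ generate $\mathcal C$ this is a continuous bijection of compact Hausdorff spaces, hence a homeomorphism; under the resulting identification $\mathcal C\cong C(\Lambda)$, $A_j$ is the $j$-th coordinate function. As $\mathcal C$ is nuclear, the operator norm of $\B(\H\otimes\K)$ restricts on $\mathcal C\otimes\B(\K)$ to the minimal C$^*$-tensor norm, and $\mathcal C\omin\B(\K)\cong C\bigl(\Lambda,\B(\K)\bigr)$ with the supremum norm. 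The element $\sum_jA_j\otimes L_j$ corresponds to the function $\lambda\mapsto\sum_j\lambda_jL_j$, whose norm is $\max_{\lambda\in\Lambda}\bigl\|\sum_j\lambda_jL_j\bigr\|$ because $\Lambda$ is compact. This is the claimed equality.

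\textbf{Step 3 (nonzero projections) and the main obstacle.} First note that nonzero projections $A_1,\dots,A_m$ with $\sum_jA_j=1$ are automatically pairwise orthogonal: for $i\neq j$ one has $A_i+A_j\leq 1$, and testing $\langle(A_i+A_j)\xi,\xi\rangle\leq\|\xi\|^2$ on $\xi\in\mathrm{ran}\,A_i$ forces $A_j\xi=0$, i.e. $A_iA_j=0$; in particular they commute, so Step 2 applies. For any character $\rho$ we have $\rho(A_j)\in\{0,1\}$ and $\sum_j\rho(A_j)=1$, whence $\Lambda\subseteq\{e_1,\dots,e_m\}$ (the standard basis vectors of $\mathbb R^m$); and $A_j\neq0$ forces the $j$-th coordinate function to be nonzero, i.e. $e_j\in\Lambda$, so $\Lambda=\{e_1,\dots,e_m\}$. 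Substituting $\lambda=e_k$ into the formula of Step 2 leaves $\|L_k\|$, and the maximum over $k$ gives $\max_{1\leq k\leq m}\|L_k\|$. The routine pieces here are the dilation identity of Step 1 and the orthogonality observation; the point requiring care is the tensor-product bookkeeping in Step 2 — verifying that the ambient operator norm equals the minimal C$^*$-tensor norm on $\mathcal C\otimes\B(\K)$ and identifying $\mathcal C\omin\B(\K)$ with $C(\Lambda,\B(\K))$ — together with the precise Gelfand/joint-spectrum correspondence for the generators $A_1,\dots,A_m$.
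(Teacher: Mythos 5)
Your proposal is correct and follows essentially the same route as the paper: the norm bound via exhibiting $\sum_j A_j\otimes L_j$ as a compression of $\bigoplus_j(1_\H\otimes L_j)$ by the isometry built from the $A_j^{1/2}$ (the paper writes this same dilation as a one-column operator matrix $G$ with $G^*G=1$), the commuting case via the Gelfand identification of $\cstar(A_1,\dots,A_m)$ with $C(\Lambda)$ and of $C(\Lambda)\otimes\B(\K)$ with $\B(\K)$-valued continuous functions, and the projection case by showing the joint spectrum is exactly $\{e_1,\dots,e_m\}$. Your write-up is, if anything, slightly more explicit than the paper's on the tensor-norm bookkeeping and on why nonzero projections summing to $1$ are pairwise orthogonal.
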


\begin{proof} Let $G_j=(A_j^{1/2}\otimes 1_\K)\in\B(\H\otimes\K)$; thus,
$G_1,\dots,G_m\in\B(\H\otimes\K)$ are positive operators such that  $\sum_jG_j^2=1\in B(\H\otimes\K)$. 
Pass to matrices of operators: let $L=(1_\H\otimes L_1)\oplus\cdots\oplus (1_\H\otimes L_m)$ and
\[
G\,=\,\left[ \begin{array}{cccc} G_1 &0 &\cdots& 0 \\ G_2 &0 &\cdots& 0 \\ \vdots & \vdots & \cdots & \vdots \\
G_m &0 &\cdots& 0
\end{array}\right]\,.
\]
Thus, $\displaystyle\sum_{j=1}^m A_j\otimes L_j = \displaystyle\sum_{j=1}^m G_j(1_\H\otimes L_j)G_j$ and
\[
\left\| \sum_{j=1}^m G_j(1_\H\otimes L_j)G_j\right\| \,=\,\|G^*LG\|\,\leq\,\|G\|^2\|L\|\,=\,\|G^*G\|\,\max_{1\leq j\leq m}\|1_\H\otimes L_j\|\,.
\]
Because $\|G^*G\|=
\left\| \sum_{j=1}^mG^2_j\right\| =1$ and $\|1_\H\otimes L_j\|=\|L_j\|$, the 
desired general inequality follows.

Suppose now that $A_1,\dots, A_m$ are pairwise commuting. Then the maximal ideal space $\mathfrak M$ of the abelian C$^*$-algebra 
$\cstar(A_1,\dots,A_m)$ is homeomorphic to $\Lambda$ under the map $\rho\mapsto (\rho(A_1),\dots,\rho(A_m))$. Hence, every operator $A_j$
is a complex-valued continuous function on $\mathfrak M$ whereby $\rho\mapsto\rho(A_j)$. 
Because the C$^*$-algebra $C(\mathfrak M)\otimes\B(\K)$ is isometrically isomorphic to the C$^*$-algebra of continuous functions on $\mathfrak M$ with values in $\B(\K)$, 
we deduce that
\[
\left\|\sum_{j=1}A_j\otimes L_j\right\|\,=\,\max_{\rho\in\mathfrak M}\left\| \sum_{j=1}\rho(A_j)\otimes L_j\right\|
\,=\, \displaystyle\max_{\lambda\in \Lambda } \left\| \displaystyle\sum_{j=1}^m\lambda_jL_j\right\| \,.
\]

Lastly, if $A_1,\dots, A_m$ are nonzero projections, then the equation $\sum_{j=1}^mA_j=1$ implies that $A_1,\dots, A_m$ are pairwise orthogonal and, hence, pairwise commuting.
Fix $i$ and choose any $\rho$ in the maximal ideal space $\mathfrak M$ of $\cstar(A_1,\dots,A_m)$ for which $\rho(A_i)=1$. (Such a $\rho$ exists because the spectrum of $A_i$
contains $1$.) Then $1=A_i+\sum_{j\neq i}A_j$ implies that $\rho(1)=\rho(A_i)+\sum_{j\neq i}\rho(A_j)$, and so $\sum_{j\neq i}\rho(A_j)=0$. As each $\rho(A_j)\geq 0$, we obtain
$\rho(A_j)=0$ for every $j\neq i$. Hence, the joint spectrum $\Lambda$ of $(A_1,\dots,A_m)$ contains only the canonical coordinate vectors $e_1,\dots, e_m$ of $\mathbb R^m$
and, therefore, 
\[
\max_{\lambda\in \Lambda } \| \displaystyle\sum_{j=1}^m\lambda_jL_j\|
\,=\,\max_{1\leq j\leq m}\|L_j\|\,,
\]
which completes the proof.
\end{proof}

The following theorem is the main result of the present paper.

\begin{theorem}\label{fd2} The following statements are equivalent for a quantum probability measure $\nu$ such that the measurement space $\ost_\nu$ has finite dimension:
\begin{enumerate}
\item\label{ap-cl-1} $\nu$ is approximately clean;
\item\label{ap-cl-2} there are a Hilbert space $\K$ and a projective quantum probability measure $\omega:\F(X)\rightarrow\B(\K)$ such that $\ost_\nu$ and $\ost_\omega$
are unitally completely order isomorphic;
\item\label{ap-cl-3} for all Hilbert spaces $\K$, all operators $L_1,\dots, L_m\in\B(\K)$, and every measurement basis $\B_\nu=\{A_1,\dots,A_m\}$ for $\nu$, the following two
properties hold:
\begin{enumerate}
\item\label{ap-cl-3a} $\B_\nu$ has trivial residual, and
\item\label{ap-cl-3b} $\left\| \displaystyle\sum_{j=1}^m A_j\otimes L_j\right\|\,=\,\max_{1\leq j\leq m}\|L_j\|$;
\end{enumerate}
\item\label{ap-cl-4} for every measurement basis $\B_\nu=\{A_1,\dots,A_m\}$ for $\nu$, 
the following two
properties hold:
\begin{enumerate}
\item\label{ap-cl-4a} $\B_\nu$ has trivial residual, and
\item\label{ap-cl-4b} $\lambda_{\rm max}(A_j)=1$ and $\lambda_{\rm min}(A_j)=0$ for every $j=1,\dots,m$;
\end{enumerate}
\item\label{ap-cl-5} for each measurement basis $\B_\nu=\{A_1,\dots,A_m\}$ for $\nu$, 
the following two
properties hold:
\begin{enumerate}
\item\label{ap-cl-5a} $\B_\nu$ has trivial residual, and
\item\label{ap-cl-5b} there exist a (possibly nonseparable) Hilbert space $\H_\varrho$, a faithful unital representation $\varrho:\B(\H)\rightarrow\B(\H_\varrho)$, 
a subspace $\H_0\subset\H_\varrho$, positive operators $Y_1,\dots,Y_m\in\B(\H_0)$, and pairwise-orthogonal projections 
$Q_1,\dots,Q_m\in\B(\H_0^\bot)$ such that, for every $j=1,\dots, m$,
$\|Y_j\|\leq1$  and $\varrho(A_j)=Q_j\oplus Y_j\in B(\H_0^\bot\oplus\H_0)$.
\end{enumerate}
\end{enumerate}
If, in addition, $\H$ has finite dimension, then the representation $\varrho$ in \eqref{ap-cl-5b} is the identity and 
the Hilbert space $\H_\varrho$ is $\H$ itself.
\end{theorem}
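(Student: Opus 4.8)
The plan is to establish the cycle of implications $\eqref{ap-cl-1}\Rightarrow\eqref{ap-cl-3}\Rightarrow\eqref{ap-cl-4}\Rightarrow\eqref{ap-cl-5}\Rightarrow\eqref{ap-cl-2}\Rightarrow\eqref{ap-cl-1}$, using Theorem \ref{fd1} to convert the order relation $\lea$ into the matrix-norm inequalities \eqref{basic ineq}, Lemma \ref{upper bound} to evaluate the relevant norms, Proposition \ref{basispartition} and Pellonp\"a\"a's Theorem \ref{p thm} for the input about projective measures, and the spectral theorem---together with a faithful unital representation of $\B(\H)$---for the structure in \eqref{ap-cl-5}. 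For $\eqref{ap-cl-1}\Rightarrow\eqref{ap-cl-3}$, fix a measurement basis $\B_\nu=\{A_1,\dots,A_m\}=\{\nu(E_1),\dots,\nu(E_m)\}$ with residual set $E_0$ and residual operator $A_0$, and recall that every value of $\nu$ lies in $\mbox{Span}\{A_1,\dots,A_m\}=\T_\nu$. The key step is to construct a projective ``cover'': choose point masses concentrated inside $E_1,\dots,E_m$ (and inside $E_0$, if $E_0\neq\emptyset$), and let $\nu'$ be the projective quantum probability measure, valued in $\B(\mathbb C^m)$ or $\B(\mathbb C^{m+1})$, whose values on those sets are mutually orthogonal rank-one projections $P_j$. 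By Lemma \ref{upper bound} the unital map $P_j\mapsto A_j$ is completely contractive, so $\nu\lea\nu'$; approximate cleanness of $\nu$ then forces $\nu'\lea\nu$, and writing $\nu'=\psi\circ\nu$ for a ucp $\psi$ shows that $\psi(\nu(E))$ lies in $\mbox{Span}\{\psi(A_1),\dots,\psi(A_m)\}=\mbox{Span}\{P_1,\dots,P_m\}$, which carries no $P_0$-component; since the $P_0$-coefficient of $\nu'$ is a nonzero measure when $E_0\neq\emptyset$, this forces $A_0=0$, i.e.\ \eqref{ap-cl-3a}. Comparing the two norm inequalities furnished by Theorem \ref{fd1}, and again invoking Lemma \ref{upper bound}, then squeezes $\|\sum_{j=1}^m A_j\otimes L_j\|$ to equal $\max_j\|L_j\|$, which is \eqref{ap-cl-3b}.

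The equivalence $\eqref{ap-cl-3}\Leftrightarrow\eqref{ap-cl-4}$ is elementary: in \eqref{ap-cl-3b}, taking $L_i=L$ and $L_j=0$ for $j\neq i$ yields $\lambda_{\rm max}(A_i)=\|A_i\|=1$, and taking $L_i=0$ and $L_j=1_\K$ for $j\neq i$ yields $\|1-A_i\|=1$, i.e.\ $\lambda_{\rm min}(A_i)=0$; conversely, a spectral value $1$ of $A_j$ produces an approximate unit eigenvector witnessing the reverse of the inequality in Lemma \ref{upper bound}, which upgrades that estimate to the equality \eqref{ap-cl-3b}. For $\eqref{ap-cl-4}\Rightarrow\eqref{ap-cl-5}$: because $\lambda_{\rm max}(A_j)=1$ need not be an eigenvalue of $A_j$ when $\dim\H=\infty$, I would pass to a faithful unital representation $\varrho:\B(\H)\rightarrow\B(\H_\varrho)$ (for instance the universal representation, which may be nonseparable) in which the spectral value $1$ of each $\varrho(A_j)$ is attained on a maximal eigensubspace $V_j\subseteq\H_\varrho$. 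Since $\sum_jA_j=1$, every unit vector in $V_i$ is annihilated by $A_j$ for all $j\neq i$, so the $V_j$ are pairwise orthogonal; with $\H_0=\bigl(\bigoplus_jV_j\bigr)^\perp$, with $Q_j\in\B(\H_0^\perp)$ the projection of $\H_0^\perp=\bigoplus_kV_k$ onto $V_j$, and with $Y_j=\varrho(A_j)|_{\H_0}\ge0$, one checks that $\H_0^\perp$ and $\H_0$ reduce each $\varrho(A_j)$ and that $\varrho(A_j)=Q_j\oplus Y_j$ with $\|Y_j\|\le\|A_j\|\le1$, which is \eqref{ap-cl-5b}; and \eqref{ap-cl-5a} is \eqref{ap-cl-4a}. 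When $\dim\H<\infty$, $1$ is automatically an eigenvalue of $A_j\in\B(\H)$, so no enlargement is needed and one takes $\varrho=\id$, $\H_\varrho=\H$; this is the final assertion of the theorem.

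For $\eqref{ap-cl-5}\Rightarrow\eqref{ap-cl-2}$: the projections $Q_1,\dots,Q_m$, which are nonzero because $\lambda_{\rm max}(A_j)=1$, are pairwise orthogonal and sum to $1$ on $\K:=\bigoplus_jQ_j\H_\varrho$, so choosing points $\xi_j\in E_j$ with $\{\xi_j\}\in\F(X)$ and setting $\omega(E)=\sum_j\chi_E(\xi_j)Q_j$ defines a projective quantum probability measure on $\B(\K)$ with $\T_\omega=\mbox{Span}\{Q_1,\dots,Q_m\}$, a system completely order isomorphic to $\mathbb C^m$. The unital linear bijection $\sum_j\alpha_jA_j\mapsto\sum_j\alpha_jQ_j$ of $\T_\nu$ onto $\T_\omega$ is completely positive, being the compression to $\K$ of $\varrho$ applied to the block form $\varrho(A_j)=Q_j\oplus Y_j$, and its inverse is completely positive because $\T_\omega$ is abelian and the inverse is unital and positive; hence $\T_\nu$ and $\T_\omega$ are unitally completely order isomorphic. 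Finally, $\eqref{ap-cl-2}\Rightarrow\eqref{ap-cl-1}$: by Proposition \ref{basispartition} the projective $\omega$ has a measurement basis of pairwise-orthogonal projections with trivial residual, so Lemma \ref{upper bound} gives $\|\sum_j\omega(E_j)\otimes L_j\|=\max_j\|L_j\|$, the largest value permitted by Lemma \ref{upper bound}, and together with Theorem \ref{p thm} this shows $\omega$ is approximately clean. A unital complete order isomorphism $\T_\nu\cong\T_\omega$ is a complete isometry, and positivity of the coefficient functionals forces the image in $\T_\omega\cong\mathbb C^m$ of any measurement basis of $\nu$ to be a permutation of the coordinate projections; thus the equality $\|\sum_jA_j\otimes L_j\|=\max_j\|L_j\|$ and the triviality of the residual pass from $\omega$ to $\nu$, whence by Theorem \ref{fd1} every $\nu'$ with $\nu\lea\nu'$ satisfies $\nu'\lea\nu$, i.e.\ $\nu$ is approximately clean.

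I expect the main difficulty to be step $\eqref{ap-cl-5}\Rightarrow\eqref{ap-cl-2}$ together with the bookkeeping in $\eqref{ap-cl-1}\Rightarrow\eqref{ap-cl-3}$. In the latter one cannot simply apply Theorem \ref{fd1} in both directions, because a measure $\nu'$ with $\nu\lea\nu'$ may have an infinite-dimensional measurement space; the explicit projective cover is precisely the device that circumvents this, and one must be careful about how the comparison measures behave on $\nu$-null sets. In $\eqref{ap-cl-5}\Rightarrow\eqref{ap-cl-2}$ one must realise the abstract operator system $\T_\nu\cong\mathbb C^m$ by a genuine \emph{projective} measure on the given $(X,\F(X))$ and verify that the correspondence $A_j\leftrightarrow Q_j$ is a \emph{complete} order isomorphism, not merely an order isomorphism. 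A secondary technical point is that the representation $\varrho$ in \eqref{ap-cl-5b} may be nonseparable, so the Voiculescu-type approximation used in the proof of Theorem \ref{fd1} is not available there and one relies on the spectral theorem alone.
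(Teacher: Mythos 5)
The one step that does not hold up as written is your derivation of the trivial residual in \eqref{ap-cl-1} $\Rightarrow$ \eqref{ap-cl-3}. You adjoin a point mass inside $E_0$ whenever $E_0\neq\emptyset$, obtain a nonzero rank-one projection $P_0=\nu'(E_0)$, observe that $\nu'(E_0)=\psi(\nu(E_0))$ must lie in $\mbox{Span}\{P_1,\dots,P_m\}$, and conclude that ``this forces $A_0=0$.'' But the contradiction $P_0\in\mbox{Span}\{P_1,\dots,P_m\}$ with $P_0\perp P_j$ is completely insensitive to the value of $A_0$: it arises just as well when $A_0=0$ (then $\psi(A_0)=0\neq P_0$). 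Read literally, your argument proves that an approximately clean $\nu$ must have $E_0=\emptyset$, which is not statement \eqref{ap-cl-3a} and is not what you want. The repair is to adjoin the extra atom only when $A_0\neq0$: then $\nu\lea\nu'$ still holds (send $P_0\mapsto A_0$ and $P_j\mapsto A_j$), approximate cleanness yields $\psi$ with $\psi(A_0)=P_0$, and the positivity of the coefficient functionals lets you write $A_0=\sum_j\alpha_jA_j$ with $\alpha_j\geq0$, whence $P_0=\sum_j\alpha_jP_j$ is orthogonal to itself and hence zero, a contradiction. The paper sidesteps all of this by proving \eqref{ap-cl-1}$\Rightarrow$\eqref{ap-cl-2} via Naimark dilation, $\nu(E)=V^*\omega(E)V$, so that $\nu(E_0)=V^*\omega(E_0)V=0$ follows from Proposition \ref{basispartition} applied to the projective dilation $\omega$. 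Relatedly, your closing appeal to Theorem \ref{fd1} in \eqref{ap-cl-2}$\Rightarrow$\eqref{ap-cl-1} needs inequality \eqref{basic ineq} including the index $j=0$, and nothing in $\nu\lea\nu'$ forces $\nu'(E_0)=0$; this is a loose end, but the paper's own proof of \eqref{ap-cl-5}$\Rightarrow$\eqref{ap-cl-1} has exactly the same one, so you have inherited it rather than created it.

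Apart from this, your route is genuinely different from the paper's and sound in outline. The paper runs \eqref{ap-cl-1}$\Rightarrow$\eqref{ap-cl-2}$\Rightarrow$\eqref{ap-cl-3}$\Rightarrow$\eqref{ap-cl-4}$\Rightarrow$\eqref{ap-cl-5}$\Rightarrow$\eqref{ap-cl-1}, entering by Naimark and exiting by Theorem \ref{fd1}; you enter by an explicit finite projective cover and exit by direct proofs of \eqref{ap-cl-5}$\Rightarrow$\eqref{ap-cl-2} and \eqref{ap-cl-2}$\Rightarrow$\eqref{ap-cl-1}. Several of your local arguments improve on the paper's: testing with $L_j=1_\K$ for $j\neq i$ to get $\|1-A_i\|=1$ replaces the Cauchy--Schwarz computation with approximate eigenvectors (though it needs $m\geq2$), and the one-shot decomposition $\H_\varrho=\left(\bigoplus_j\ker(\varrho(A_j)-1)\right)\oplus\H_0$ replaces the paper's $m$-step iteration. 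Two facts you only gesture at should be made explicit: in \eqref{ap-cl-5}$\Rightarrow$\eqref{ap-cl-2}, that a unital positive map defined on the abelian C$^*$-algebra $\mbox{Span}\{Q_1,\dots,Q_m\}\cong\mathbb C^m$ is automatically completely positive; and in \eqref{ap-cl-2}$\Rightarrow$\eqref{ap-cl-1}, that a basis of $\mathbb C^m$ consisting of positive elements summing to $1$ whose dual basis is also positive must consist of the coordinate projections (equivalently, an entrywise nonnegative matrix with entrywise nonnegative inverse is monomial). Finally, the paper uses Berberian's representation rather than the universal representation in \eqref{ap-cl-4}$\Rightarrow$\eqref{ap-cl-5}; either works, but if you use the universal representation you should justify that every approximate eigenvalue of a hermitian operator becomes a genuine eigenvalue there.
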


\begin{proof} \eqref{ap-cl-1} $\Rightarrow$ \eqref{ap-cl-2}. 
Suppose that $\{B_1,\dots,B_m\}$ is a linear
basis for $\T_\nu$, where $B_j=\nu(F_j)$ for some $F_j\in\F(X)$. By Naimark's Dilation Theorem there are a Hilbert space $\K$, an isometry $V:\H\rightarrow\K$, and a
projective measurement $\omega:\F(X)\rightarrow \B(\K)$ such that $\nu(E)=V^*\omega(E) V$ for all $E\in\F(X)$.
If $\Phi:\T(\H)\rightarrow\T(\K)$ denotes
the quantum channel $\Phi(R)=VRV^*$ and if $\phi=\Phi^*$, then Naimark's Dilation Theorem takes the form $\nu=\phi\circ\omega$. Because $\nu$ is approximately clean,
$\omega=\psi\circ\nu$ for an approximately normal ucp map $\psi:\B(\H)\rightarrow\B(\K)$. Thus $\T_\omega$ and $\T_\nu$ necessarily have the same dimension, namely $m$.

Let $Q_j=\omega(F_j)\in\T_\omega$ for each $j$. By Proposition \ref{basispartition}, there are pairwise disjoint sets $E_1,\dots, E_m\in\F(X)$ such that, if $P_j=\omega(E_j)$ for each $j$,
then $\{P_1,\dots,P_m\}$ is a measurement basis for $\T_\omega$ with trivial residual. For each $j$ let $A_j=\nu(E_j)$ and note that $P_j=\psi(A_j)$. Thus, $\{A_1,\dots,A_m\}$
is a measurement basis for $\T_\nu$ with trivial residual.
 
Moreover, 
by Lemma \ref{upper bound}, for every $p\in\mathbb N$ and all $L_1,\dots, L_m\in\M_p(\mathbb C)$, we have
\[
\begin{array}{rcl}
\displaystyle\max_{1\leq j\leq m}\|L_j\| &=& \left\|\displaystyle\sum_{j=1}^m P_j\otimes L_j\right\| \,=\,\left\|\displaystyle\sum_{j=1}^m \psi(A_j)\otimes L_j\right\| \\ && \\
&\leq& \left\|\displaystyle\sum_{j=1}^m A_j\otimes L_j\right\| \\&&\\
&\leq& \displaystyle\max_{1\leq j\leq m}\|L_j\| \,.
\end{array}
\]
Thus, $\psi_{\vert\ost_\nu}$ is a unital completely contractive map of $\ost_\nu$ onto $\ost_\omega$, and so $\psi_{\vert\ost_\nu}$ is a unital complete order isomorphism
\cite[Proposition 2.11]{Paulsen-book}.

\eqref{ap-cl-2} $\Rightarrow$ \eqref{ap-cl-3}. Fix a measurement basis $\B_\nu=\{A_1,\dots,A_m\}$ for $\nu$. 
By assumption, there are a Hilbert space $\K$ and a 
projective quantum probability measure $\omega:\F(X)\rightarrow\B(\K)$ such that $\ost_\nu$ and $\ost_\omega$
are unitally completely order isomorphic. Because every measurement basis for $\ost_\omega$ has trivial residual, the same is true of $\T_\nu$.
If $\psi:\ost_\nu\rightarrow\ost_\omega$ denotes this complete order isomorphism, then for any finite-dimensional Hilbert
space $\K$ and $L_1,\dots, L_m\in\B(\K)$, we have, by Lemma \ref{upper bound}, that
\[
\left\| \sum_{j=1}^m A_j\otimes L_j\right\|\,=\,\left\| \sum_{j=1}^m \psi(A_j)\otimes L_j\right\|\,=\,\max_{1\leq j\leq m}\|L_j\|\,.
\]
Now if $T$ is a bounded linear operator on a Hilbert space $\mathcal L$ of dimension at least $2$, then for every $\varepsilon>0$ there is a unit vector $\xi\in\mathcal L$ 
such that $\|T\|<\|T\xi\|+\varepsilon$. If $\mathcal L_\xi=\mbox{Span}\{\xi, T\xi\}$, and if $Q\in\B(\mathcal L)$ is the projection 
with range $\mathcal L_\xi$, then $\|QTQ\|<\|T\xi\|+\varepsilon$.
That is, to know the norm of an operator $T$, it is sufficient to know the norms of all compressions $QTQ$ of $T$ using projections of rank $2$. Hence, if the norm equalities above
hold for all finite-dimensional Hilbert spaces $\K$, then they also hold for all arbitrary Hilbert spaces $\K$.

\eqref{ap-cl-3} $\Rightarrow$ \eqref{ap-cl-4}. Let $\B_\nu=\{A_1,\dots,A_m\}$ be a measurement basis for $\nu$.
Fix $i$ and let $L_i=1$ and $L_j=0$ for $j\neq i$. Then, by assumption, $\|A_i\otimes L_i\|=\|L_i\|$ and so $\|A_i\|=1$. Since
$\lambda_{\rm max}(A_i)=\|A_i\|$, we deduce that $\lambda_{\rm max}(A_i)=1$. Because every spectral point of a hermitian operator is an approximate eigenvalue, the fact that
$1$ is in the spectrum of $A_i$ implies that there is a sequence of unit vectors $\xi_n\in\H$ such that $\lim_n\|A_i\xi_n-\xi_n\|=0$. Because, $A_i-1=-\sum_{j\neq i}A_j$ we 
rewrite this limit as $\lim_n\|\sum_{j\neq i}A_j\xi_n\|=0$. Hence, by the Cauchy--Schwarz inequality,
\[
\lim_{n\rightarrow\infty}\left\langle\sum_{j\neq i}  A_j\xi_n,\xi_n\right\rangle \,=\,0\,.
\]
That is,
\[
0\,=\,\lim_{n\rightarrow\infty}\sum_{j\neq i} \langle A_j^{1/2}\xi_n,A_j^{1/2}\xi_n\rangle \,=\, 
\sum_{j\neq i} \left(\lim_{n\rightarrow\infty}\|A_j^{1/2}\xi_n\|^2\right)\,.
\]
and so $0=\lim_{n\rightarrow\infty}\|A_j^{1/2}\xi_n\|$ for each $j\neq i$. Thus, $0$ in the spectrum of $A_j$ for all $j\neq i$. That is, $\lambda_{\rm min}(A_j)=0$ for all $j\neq i$.
Our choice of $i$ was arbitrary; thus, for every $j=1,\dots, m$ we necessarily have $\lambda_{\rm max}(A_j)=1$ and $\lambda_{\rm min}(A_j)=0$.

\eqref{ap-cl-4} $\Rightarrow$ \eqref{ap-cl-5}. Fix a measurement basis $\B_\nu=\{A_1,\dots,A_m\}$ for $\nu$.
Let $\varrho:\B(\H)\rightarrow\B(\H_\varrho)$ be the unital representation of $\B(\H)$ that is described in \cite{berberian1962}, which has
the property that $\lambda$ is an eigenvalue of $\varrho(Y)$ for every spectral point $\lambda$ of a hermitian operator $Y\in\B(\H)$. 
(Hence, $\H_\varrho$ is necessarily non separable if $\H$ has infinite dimension.) Therefore, by the hypothesis that $\lambda_{\rm max}(A_j)=1$ and $\lambda_{\rm min}(A_j)=0$ for each $j$,
we deduce that $0$ and $1$ are eigenvalues of the positive operators $\varrho(A_j)$ for all $j=1,\dots,m$. 

For simplicity of notation, we shall drop the reference to the unital homomorphism $\varrho$ and assume that $0$ and $1$ 
are eigenvalues of the positive operators $A_j$ for all $j=1,\dots,m$. (In the cases where $\dim\H$ is finite, we may dispense with $\varrho$ altogether because spectral points are necessarily
eigenvalues in such cases.)

Decompose $\H$ as $\H=\ker(A_1-1)\oplus \H_0^1$. Because both $A_j$ and $1-A_j$ are positive, and because $\sum_{j=1}^mA_j=1$, the
decomposition $\H=\ker(A_1-1)\oplus \H_1$ implies that
\[
A_1\,=\,\left[ \begin{array}{cc} 1& 0  \\ 0&G_1 \end{array}\right] \quad\mbox{and}\quad
A_j\,=\,\left[ \begin{array}{cc} 0& 0  \\ 0&G_j \end{array}\right]\,\;\mbox{for all }j\geq 2\,,
\]
for some positive contractions $G_1,\dots,G_m\in\B(\H_1)$ satisfying $\sum_{j=1}^mG_j=1$. Because $1$ is an eigenvalue of $A_2$, the matrix
representation of $A_2$ above shows that $\ker(A_2-1)$ must lie within the subspace $\H_1$. Thus, we decompose the Hilbert space $\H$ further as
$\H=\ker(A_1-1)\oplus\ker(A_2-1)\oplus\H_2$ so that
\[
A_1\,=\,\left[ \begin{array}{ccc} 1& 0 & 0 \\ 0&0&0 \\ 0&0& K_1 \end{array}\right],\quad
A_2\,=\,\left[ \begin{array}{ccc} 0& 0 & 0 \\ 0&1&0 \\ 0&0& K_2 \end{array}\right],\;\mbox{and}\;
A_j\,=\,\left[ \begin{array}{ccc} 0& 0 & 0 \\ 0&0&0 \\ 0&0& K_j \end{array}\right] 
\]
for all $j\geq3$ and
for some positive contractions $K_1,\dots,K_m\in\B(\H_2)$ satisfying $\sum_{j=1}^mK_j=1$.
As before, $A_3$ has an eigenvalue $1$; the corresponding eigenspace $\ker (A_3-1)$ must lie in $\H_3$, by the matrix representation of $A_3$ above.
Therefore, an iteration of the argument used to this point produces, after $m$ steps in total, a decomposition of $\H$ as 
\[
\H\,=\,\left(\bigoplus_{j-1}^m\ker(A_j-1)\right)\,\bigoplus\,\H_m\,,
\]
for a subspace $\H_m\subset\H$, and where, for every $j$,  $A_j=Q_j\oplus Y_j$ for some positive contractions $Y_j\in\B(\H_m)$ such that $\sum_{j=1}^mY_j=1\in\B(\H_m)$
and where each $Q_j$ is the projection with range $\ker(A_j-1)$.

\eqref{ap-cl-5} $\Rightarrow$ \eqref{ap-cl-1}. Suppose that $\nu'$ is a quantum probability measure and $\nu\lea\nu'$.
The proof of Proposition \ref{basispartition} shows that there exists pairwise-disjoint measurable sets $E_1,\dots,E_m\in\F(X)$ such that, if $A_j=\nu(E_j)$ for each $j$, then
$\{A_1,\dots, A_m\}$ is a measurement basis for $\nu$. By assumption \eqref{ap-cl-5}, 
there exist a (possibly nonseparable) Hilbert space $\H_\varrho$, a faithful unital representation $\varrho:\B(\H)\rightarrow\B(\H_\varrho)$, 
a subspace $\H_0\subset\H_\varrho$, positive operators $Y_1,\dots,Y_m\in\B(\H_0)$, and pairwise-orthogonal projections 
$Q_1,\dots,Q_m\in\B(\H_0^\bot)$ such that, for every $j=1,\dots, m$,
$\|Y_j\|\leq1$  and $\varrho(A_j)=Q_j\oplus Y_j\in B(\H_0^\bot\oplus\H_0)$. Therefore, for any finite-dimensional Hilbert space $\K$ and operators $L_1,\dots, L_m\in\B(\K)$,
\[
\begin{array}{rcl}
\left\| \displaystyle\sum_{j=1}^m A_j\otimes L_j\right\| &=& \left\| \displaystyle\sum_{j=1}^m \varrho(A_j)\otimes L_j\right\| \,=\,\left\| \displaystyle\sum_{j=1}^m (Q_j\oplus Y_j)\otimes L_j\right\| \\ && \\
&=& \max\left\{ \left\| \displaystyle\sum_{j=1}^m Q_j\otimes L_j\right\|,\;\left\| \displaystyle\sum_{j=1}^m Y_j\otimes L_j\right\| \right\} \\ && \\
&=& \displaystyle\max_{1\leq j\leq m} \| L_j\|  ,
\end{array}
\]
since, by Lemma \ref{upper bound}, 
\[
\left\| \displaystyle\sum_{j=1}^m Q_j\otimes L_j\right\|\,=\,\displaystyle\max_{1\leq j\leq m} \| L_j\|  \quad\mbox{and}\quad
\left\| \displaystyle\sum_{j=1}^m Y_j\otimes L_j\right\| \,\leq\, \displaystyle\max_{1\leq j\leq m} \| L_j\| \,.
\]
Therefore, by Lemma \ref{upper bound} again,
\[
\left\|  \sum_{j=1}^m \nu'(E_j)\otimes L_j\right\|\, \leq\,
 \displaystyle\max_{1\leq j\leq m} \| L_j\| 
 \,=\, \left\|  \sum_{j=1}^m \nu(E_j)\otimes L_j\right\|\,.
\]
Hence, Theorem \ref{fd1} implies that $\nu'\lea\nu$.
 \end{proof}
 
\section{Observations and Applications}\label{S:remarks}

\subsection{Approximately clean 1-0 measurements}

A \emph{1-0 measurement} is one in which the sample space of outcomes is given by $X=\{0,1\}$ and the space of events by
$\F(X)=\{\emptyset, \{0\},\{1\}, X\}$. 
A direct application of Theorem \ref{fd2} leads to the following
equivalent statements for a quantum probability measures $\nu$ on $(X,\F(X))$:
\begin{enumerate}
\item\label{two-a} $\nu$ is approximately clean;
\item\label{two-b} $\lambda_{\rm min}(\nu(\{0\}))=0$ and $\lambda_{\rm max}(\nu(\{0\}))=1$.
\end{enumerate}

This spectral condition above also appears in another notion of optimality for quantum measurements.  
By the notation $\nu_2 \lef \nu_1$ one means that $\nu_2$ is a \emph{fuzzy version} of $\nu_1$, which is to say that there is a confidence mapping $\Lambda$
on $\F(X)$ such that $\tr\left(R\nu_2(E)\right)=\tr\left(R\nu_1(\Lambda(E))\right)$ for all measurable sets $E$ and density operators $R$ \cite{heinonen2005}.
With respect to this partial order,
$\nu$ is optimal if $\nu \lef \nu'$ implies $\nu' \lef \nu$. In the case where
$X=\{0,1\}$ and $\F(X)$ is the power set of $X$, a 1-0 quantum measurement $\nu$ is optimal with respect to $\lef$ if and only if $\nu(\{0\})$
satisfies the spectral property \eqref{two-b} 
above \cite[Proposition 3]{heinonen2005}.
Hence, we deduce that a 1-0 measurement is approximately clean if and only if it is optimal with respect to the fuzzy ordering $\lef$.

\subsection{Approximately clean but not clean}

\begin{proposition}\label{drf} Assume that $\H$ is a separable Hilbert spaces of infinite dimension and that $X=\{0,1\}$ is an outcome space
for 1-0 measurements.
\begin{enumerate}
\item\label{drf-1} There exists an approximately normal ucp map $\phi:\B(\H)\rightarrow\B(\H)$ such that $\phi$
is not normal.
\item\label{drf-2} There exist approximately clean 1-0 measurements $\omega, \nu:\F(X)\rightarrow\B(\H)$ such that $\omega \lea \nu$, but $\omega\not\lec\nu$.
\item\label{drf-3} There exist approximately clean 1-0 measurements $\omega, \nu:\F(X)\rightarrow\B(\H)$ such that $\nu \lec \omega$, but $\omega\not\lec\nu$.
\end{enumerate}
\end{proposition}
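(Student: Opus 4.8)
The plan is to prove \eqref{drf-1} by an explicit construction and then to deduce both \eqref{drf-2} and \eqref{drf-3} from a single example: a multiplication operator paired with its Naimark dilation. For \eqref{drf-1}, fix an orthonormal basis $\{e_n\}_n$ of $\H$ and let $\rho_n$ be the vector state $\rho_n(X)=\langle Xe_n,e_n\rangle$ on $\B(\H)$. Since $\T(\H)$ is separable, the weak*-compact state space of $\B(\H)$ is weak*-metrisable, so some subsequence of $(\rho_n)_n$ converges weak* to a state $\Lambda$; as $\langle Ke_n,e_n\rangle\to0$ for every compact $K$, the state $\Lambda$ annihilates all compact operators. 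Put $\phi(X)=\Lambda(X)1$ and $\phi_n(X)=\rho_n(X)1$. Each $\phi_n$ is a normal ucp map---it is the normal state $\rho_n$ followed by the normal unital $*$-homomorphism $z\mapsto z1$ of $\mathbb C$ into $\B(\H)$---and along the chosen subsequence $\phi_n(X)\to\phi(X)$ in norm, hence ultraweakly, for every $X$; therefore $\phi$ is approximately normal. But $\phi$ is not normal: choosing an increasing sequence $\{P_N\}_N$ of finite-rank projections with $\sup_NP_N=1$, one has $\phi(P_N)=\Lambda(P_N)1=0$ for all $N$, whereas $\phi(\sup_NP_N)=\phi(1)=1$.

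For \eqref{drf-3}, identify $\H$ with $L^2[0,1]$ and take $A\in\B(\H)$ to be multiplication by the coordinate function, so $A$ is an injective positive contraction with $\mathrm{Sp}(A)=[0,1]$. Let $\nu$ be the 1-0 measurement with $\nu(\{0\})=A$ (its measurement space $\T_\nu=\mathrm{Span}\{1,A\}$ is $2$-dimensional). Since $\lambda_{\rm min}(A)=0$ and $\lambda_{\rm max}(A)=1$, the criterion for approximately clean 1-0 measurements stated just above shows that $\nu$ is approximately clean. Let $\omega:\F(X)\to\B(\H_\omega)$ be a Naimark dilation of $\nu$: a projective 1-0 measurement with $\nu(E)=V^*\omega(E)V$ for an isometry $V:\H\to\H_\omega$, where $\H_\omega$ is separable and hence identifiable with $\H$. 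The quantum channel $R\mapsto VRV^*$ has adjoint $X\mapsto V^*XV$, which carries $\omega(E)$ to $\nu(E)$; thus $\nu\lec\omega$. The projection $P=\omega(\{0\})$ is nontrivial, since $P\in\{0,1\}$ would force $A=V^*PV\in\{0,1\}$, so $\lambda_{\rm min}(P)=0$, $\lambda_{\rm max}(P)=1$, and $\omega$ too is approximately clean.

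To finish \eqref{drf-3} I must show $\omega\not\lec\nu$. Suppose instead $\omega=\Phi^*\circ\nu$ for a quantum channel $\Phi:\T(\H_\omega)\to\T(\H)$; then $\theta:=\Phi^*:\B(\H)\to\B(\H_\omega)$ is a \emph{normal} ucp map with $\theta(A)=P$. Pick a unit vector $\eta$ in the range of $1-P$; then $\sigma(X):=\langle\theta(X)\eta,\eta\rangle$ is a normal state on $\B(\H)$ with $\sigma(A)=\langle P\eta,\eta\rangle=0$. Writing $\sigma=\tr(R\,\cdot\,)$ for a density operator $R$, we obtain $\tr(A^{1/2}RA^{1/2})=0$, hence $A^{1/2}R^{1/2}=0$; injectivity of $A$ then forces $R^{1/2}=0$, contradicting $\tr(R)=1$. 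Hence $\omega\not\lec\nu$. Now \eqref{drf-2} follows with the same pair $(\omega,\nu)$: from $\nu\lec\omega$ we get $\nu\lea\omega$, and since $\nu$ is approximately clean this yields $\omega\lea\nu$, while $\omega\not\lec\nu$ by the above.

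The binding constraint on the example is the requirement $\omega\not\lec\nu$: had $A$ possessed $0$ and $1$ as genuine eigenvalues with orthogonal eigenvectors, compression onto the span of those two vectors would be a normal ucp map carrying $A$ to a rank-one projection in $\M_2$, so that $\omega\lec\nu$ for a suitable projective $\omega$. Choosing $A$ injective with $1$ not an eigenvalue is exactly what rules this out, and I expect that choice---together with the normality-of-$\Phi^*$ argument above---to be the crux; the remaining subtlety, namely the verification in \eqref{drf-1} that $\phi$ is a point-ultraweak limit of normal ucp maps, is routine once the cluster-point state $\Lambda$ has been produced.
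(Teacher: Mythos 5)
Your proposal is correct in substance, but it takes a genuinely different route from the paper's, and it contains one technical slip you should repair. The slip is in part \eqref{drf-1}: the state space of $\B(\H)$ is \emph{not} weak*-metrisable when $\dim\H=\infty$ (metrisability of the dual ball in the weak* topology requires norm-separability of $\B(\H)$ itself, not of its predual $\T(\H)$), so you cannot extract a convergent \emph{subsequence} of the vector states $(\rho_n)_n$. Banach--Alaoglu still provides a weak*-cluster point $\Lambda$ along a subnet, $\Lambda$ still annihilates the compacts, and Definition \ref{defn:approx normal} only asks for a net of normal ucp maps, so your construction survives verbatim once ``subsequence'' is replaced by ``subnet.'' With that fix, $\phi(X)=\Lambda(X)1$ is a perfectly good approximately normal, non-normal ucp map; it is, however, a different example from the paper's, which instead takes the map $\phi$ with $\phi(A)=P$ furnished by Theorem \ref{fd1} (for $A$ multiplication by $t$ on $L^2[0,1]$ and $P$ an infinite/co-infinite projection) and rules out normality via a Kraus decomposition together with the injectivity of $A$. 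The paper's choice has the advantage that the same pair $(A,P)$ then drives parts \eqref{drf-2} and \eqref{drf-3}; your singular-state example is more elementary but forces you to build the measurement example separately, which you do.

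For \eqref{drf-2} and \eqref{drf-3} your route is also different but valid. The paper establishes $\omega\lea\nu$ directly from Theorem \ref{fd1} and proves $\nu\lec\omega$ by an Arveson-type argument producing a unitary intertwiner between $P$ and its image under a representation of $\cstar(P)$; you instead take $P=\omega(\{0\})$ to be the Naimark dilation projection of $A$, which makes $\nu\lec\omega$ immediate, and you recover $\omega\lea\nu$ from the approximate cleanness of $\nu$ (via the spectral criterion of Theorem \ref{fd2}) rather than by checking the norm inequalities of Theorem \ref{fd1} -- a legitimate shortcut, since $\nu\lec\omega$ trivially implies $\nu\lea\omega$. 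The crux, $\omega\not\lec\nu$, is the same obstruction in both proofs (normality of $\Phi^*$ against the injectivity of $A$), but your normal-state argument ($\sigma(A)=\tr(RA)=0$ forces $R=0$) cleanly bypasses the Kraus decomposition the paper uses. All the auxiliary checks (nontriviality of $P$, separability of the dilation space, approximate cleanness of both measures) are in order.
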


\begin{proof} To prove statement \eqref{drf-1}, assume without loss of generality that $\H=L^2([0,1])$
and suppose that $A\in\B(\H)$ is the multiplication operator $Af(t)=tf(t)$, $f\in\H$, and let $P\in\B(\H)$ be a projection
such that $P$ and $1-P\neq 0$ are of infinite rank. Thus, $A$ and $P$
are positive and $\lambda_{\rm min}(A)=\lambda_{\rm min}(P)=0$ and $\lambda_{\rm max}(A)=\lambda_{\rm max}(P)=1$.
Therefore, by Theorem \ref{fd1} and its proof,
there is an approximately
normal ucp map $\phi:\B(\H)\rightarrow\B(\H)$
for which $P=\phi(A)$. Suppose, further, that $\phi$ is normal.
Thus, $P=\sum_k W_k^*AW_k$ for some sequence $\{W_k\}_{k}\subset\B(\H)$ of contractions. Choose any nonzero $\xi\in\H$ such that $P\xi=0$; thus,
\[
0\,=\,\langle P\xi,\xi\rangle\,=\,\sum_k \langle AW_k\xi,W_k\xi\rangle\,=\,\sum_k \|A^{1/2}W_k\xi\|^2\,.
\]
Therefore, $AW_k\xi=0$ for every $k$. Because $A$ has no eigenvalues, necessarily $W_k\xi=0$ for every $k$, and so also $W_k^*W_k\xi=0$ for every $k$. Thus,
$0\neq\xi=\sum_k W_k^*W_k\xi=0$, which is a contradiction. Thus, there cannot be a normal ucp map $\phi$ for which $P=\phi(A)$. Hence, $\phi$ is approximately
normal but not normal.

For statement \eqref{drf-2}, let $\omega,\nu:\F(X)\rightarrow\B(\H)$ denote the unique 1-0 quantum measurements for which $\omega(\{0\})=P$
and $\nu(\{0\})=A$. The proof of assertion \eqref{drf-1} shows that there is an approximately normal $\phi$ such that $\omega=\phi\circ\nu$ but there are no normal ucp $\psi$
satisfying $\omega=\psi\circ\nu$. Thus, $\omega \lea \nu$, but $\omega\not\lec\nu$.

To prove statement \eqref{drf-3}, let us use $\omega$ and $\nu$ as above. Because \eqref{drf-2} already asserts that $\omega\not\lec\nu$, we need only show that 
$\nu \lec \omega$. Theorem \ref{fd2}\eqref{ap-cl-3} implies that $\|A\otimes L + 1\otimes K\|=\|P\otimes L+1\otimes K\|$ for all $n\times n$ matrices $L,K$ and all $n\in\mathbb N$.
Hence, by \cite[Theorem 2.4.2(iv)]{arveson1972}, there is a separable
Hilbert space $\K$, a homomorphism $\pi:\cstar(P)\rightarrow\B(\K)$ (where $\cstar(P)$ is the unital C$^*$-algebra generated by $P$) and an isometry $V:\H\rightarrow\K$ such that
$A=V^*\pi(P)V$. Because $A$ has infinite and $1-A$ have infinite rank, so do $\pi(P)$ and $1_{\K}-\pi(P)$. Thus, there is a unitary operator $U:\H\rightarrow\K$ such that $\pi(P)U=UP$.
Define a normal ucp map $\psi:\B(\H)\rightarrow\B(\H)$ by $\psi(X)=(U^*V)^*X(U^*V)$ and observe that $A=\psi(P)$. That is, $\nu=\psi\circ\omega$ for a normal ucp map $\psi$, 
and therefore $\nu\lec\omega$.
\end{proof}

\begin{corollary} There exists a 1-0 quantum measurement that is approximately clean but not clean.
\end{corollary}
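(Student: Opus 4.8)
The plan is to take as our example the $1$-$0$ quantum measurement $\nu$ furnished by Proposition~\ref{drf}, and to check it directly against Definition~\ref{clean order}. No new analysis is needed: all of the substantive work has already been done in Proposition~\ref{drf}, so what remains is a brief bookkeeping argument with the order relations $\lec$ and $\lea$, and the only point that requires attention is to keep straight which measure is the \emph{cleaner} one in each relation (recall that $\nu_2\lec\nu_1$ means $\nu_1$ is cleaner than $\nu_2$).

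First I would recall from Proposition~\ref{drf}\eqref{drf-3} that the $1$-$0$ measurements $\omega,\nu:\F(X)\rightarrow\B(\H)$ constructed there are both approximately clean and satisfy $\nu\lec\omega$ together with $\omega\not\lec\nu$. In particular $\nu$ is approximately clean. Next I would test $\nu$ for cleanness: by Definition~\ref{clean order}, $\nu$ is clean precisely when every quantum probability measure $\nu_2$ with $\nu\lec\nu_2$ also satisfies $\nu_2\lec\nu$. Taking $\nu_2=\omega$, we have $\nu\lec\omega$ while $\omega\lec\nu$ fails, so $\omega$ witnesses the failure of the defining property of cleanness. Hence $\nu$ is not clean.

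Combining the two observations, $\nu$ is an approximately clean $1$-$0$ quantum measurement that is not clean, which is exactly the assertion of the corollary. I do not expect any genuine obstacle here; the only thing one must be careful about is the orientation of $\lec$ and $\lea$, since the real content -- the existence of $\omega$ and $\nu$ with $\nu\lec\omega$ but $\omega\not\lec\nu$ -- is supplied by Proposition~\ref{drf}.
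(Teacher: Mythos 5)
Your proposal is correct and follows exactly the paper's own argument: take the pair $\omega,\nu$ from Proposition~\ref{drf}, note that both are approximately clean, and observe that $\nu\lec\omega$ together with $\omega\not\lec\nu$ directly violates the definition of cleanness for $\nu$. The careful unpacking of the orientation of $\lec$ in Definition~\ref{clean order} is the only point of substance, and you handle it correctly.
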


\begin{proof} Let $\omega$ and $\nu$ be the approximately clean 1-0 measurements discussed in the proof of Proposition \ref{drf}. Because
$\nu \lec \omega$ and $\omega \not\lec\nu$, the approximately clean quantum probability measure $\nu$ fails to be clean.
\end{proof}

\subsection{Clean qubit measurements are projective}

If $\H$ is finite-dimensional, then a quantum measure is clean if and only if it is approximately clean. Furthermore, if $\dim\H=2$, then
Theorem \ref{fd2} implies that 
$\nu$ is clean if and only if $\nu$ is a projection-valued
measure. In contrast, a qubit measurement $\nu$ with sample space $X=\{x_1,\dots,x_n\}$
is clean in the original sense of \cite{clean2005,kahn2007}
if and only if $\nu(\{x_j\})$ has rank-1 for every $j=1,\dots,n$ \cite[Theorem 11.2]{clean2005}, \cite{kahn2007}. Thus, the stricter criteria for cleanness
used herein leads to a smaller class of clean quantum measurements than was found in \cite{clean2005,kahn2007}.

\subsection{Structure of quantum probability measures with finite-dimensional measurements spaces}

\begin{proposition}\label{structure} If $\{A_1,\dots,A_m\}$ is a measurement basis for a quantum probability measure $\nu$, 
then there exist finite signed measures
$\upsilon_1$, \dots, $\upsilon_m$ on $(X,\F(X))$ such that each  $\upsilon_j\leac\nu$ 
(in the sense that $\nu(E)=0$ implies $\upsilon_j(E)=0$) and
\[
\nu(E)\,=\,\sum_{j=1}^m \upsilon_j(E)A_j\,,\;\mbox{for all }E\in\F(X)\,.
\]
\end{proposition}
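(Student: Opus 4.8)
The plan is to use the measurement basis axioms directly. Recall that $\mathcal B_\nu=\{A_1,\dots,A_m\}$ being a measurement basis means (i) $A_j=\nu(E_j)$ for pairwise disjoint $E_j\in\F(X)$, (ii) every $Z\in\T_\nu$ has a unique weak*-convergent expansion $Z=\sum_{j=1}^m\alpha_{A_j,Z}A_j$, and (iii) each coefficient functional $\varphi_{A_j}(Z)=\alpha_{A_j,Z}$ is a normal positive linear functional on $\T_\nu$. Since for each $E\in\F(X)$ the operator $\nu(E)$ lies in $\R_\nu\subset\T_\nu$, axiom (ii) gives unique scalars so that $\nu(E)=\sum_{j=1}^m\varphi_{A_j}(\nu(E))\,A_j$. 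The natural candidate is therefore $\upsilon_j(E):=\varphi_{A_j}(\nu(E))$, and the work is to check that each $\upsilon_j$ is a finite signed measure on $(X,\F(X))$ and that $\upsilon_j\leac\nu$.

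First I would establish that $\upsilon_j$ is finitely additive: if $E,F\in\F(X)$ are disjoint then $\nu(E\cup F)=\nu(E)+\nu(F)$, and by uniqueness of the expansion in axiom (ii) the coefficients add, so $\upsilon_j(E\cup F)=\upsilon_j(E)+\upsilon_j(F)$. Next, countable additivity: for pairwise disjoint $\{E_k\}_{k\in\mathbb N}$ we have $\nu(\bigcup_k E_k)=\sum_k\nu(E_k)$ with convergence in the ultraweak topology of $\B(\H)$, which on the dual operator system $\T_\nu$ is exactly the weak*-topology; since $\varphi_{A_j}$ is \emph{normal} (hence weak*-continuous on $\T_\nu$), we may pass $\varphi_{A_j}$ through the sum to get $\upsilon_j(\bigcup_k E_k)=\sum_k\upsilon_j(E_k)$. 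Boundedness/finiteness of $\upsilon_j$ follows from continuity of $\varphi_{A_j}$ together with $\|\nu(E)\|\le\|\nu(X)\|=1$ for all $E$; alternatively one notes $\upsilon_j(X)=\varphi_{A_j}(1)$ is finite and the total variation is controlled by $\|\varphi_{A_j}\|$. The absolute continuity $\upsilon_j\leac\nu$ is immediate: if $\nu(E)=0$ then $\upsilon_j(E)=\varphi_{A_j}(0)=0$.

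I would then also observe that the $\upsilon_j$ are real-valued (hence genuinely signed measures rather than complex): since each $A_j\in\B(\H)_+$ is hermitian and $\nu(E)$ is hermitian, applying the expansion to $\nu(E)=\nu(E)^*=\sum_j\overline{\upsilon_j(E)}A_j$ and invoking uniqueness forces $\upsilon_j(E)=\overline{\upsilon_j(E)}\in\mathbb R$. (One could alternatively appeal directly to axiom (iii) stating $\varphi_{A_j}$ is positive, which already pins down real values on the hermitian part.) Finally, substituting back gives $\nu(E)=\sum_{j=1}^m\upsilon_j(E)A_j$ for every $E\in\F(X)$, as claimed.

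The one point requiring genuine care---the main obstacle, such as it is---is making sure that the weak*-continuity of $\varphi_{A_j}$ is the \emph{right} continuity to interchange with the series $\sum_k\nu(E_k)$: the axioms guarantee $\varphi_{A_j}$ is a normal positive functional on $\T_\nu$ with its own predual $\T_{\nu,*}$, and I must confirm that the ultraweak convergence of $\sum_k\nu(E_k)$ in $\B(\H)$ restricts to weak*-convergence in $\T_\nu$ (i.e. that the inclusion $\T_\nu\hookrightarrow\B(\H)$ is a weak*-homeomorphism onto its image, which holds because $\T_\nu$ is ultraweakly closed, as noted when $\T_\nu$ was defined). Granting that identification---already built into the paper's setup---the argument is routine.
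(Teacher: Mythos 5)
Your proof is correct and follows essentially the same route as the paper's: define $\upsilon_j(E)$ as the $j$-th coefficient of $\nu(E)$ in the basis expansion, deduce countable additivity from that of $\nu$, and obtain $\upsilon_j\leac\nu$ from uniqueness of the expansion. The only cosmetic difference is that you justify interchanging the coefficient functional with the ultraweakly convergent series by invoking normality of $\varphi_{A_j}$ (axiom (iii)), whereas the paper leans on linear independence together with the finite-dimensionality of $\mathrm{Span}\{A_1,\dots,A_m\}$; both justifications are valid.
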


\begin{proof}  For every $E\in\F(X)$ there exist unique $\alpha_1^E, \dots,\alpha_m^E\in\mathbb R$ such that $\nu(E)=\sum_{j=1}^m\alpha_j^E A_j$.
Define $\upsilon_j:\F(X)\rightarrow\mathbb R$ by $\upsilon_j(E)=\alpha_j^E$, $j=1,\dots,m$. By the linear independence of $A_1,\dots, A_m$, $\upsilon_j(\emptyset)=0$
for each $j$, while $\upsilon_j(X)=1$ for all $j$. To show that $\upsilon_j$ is countably additive, suppose that $\{E_k\}_{k\in\mathbb N}$
is a countable collection of pairwise disjoint measurable sets. By the countable additivity of $\nu$, 
\[
\nu\left(\displaystyle\bigcup_{k\in\mathbb N}E_k\right) \,=\, \displaystyle\sum_{k\in\mathbb N}\nu(E_k)
\,=\,\displaystyle\sum_{k\in\mathbb N}\sum_{j=1}^m\alpha_j^{E_k}A_j \,=\,
\displaystyle\sum_{j=1}^m\left(\displaystyle\sum_{k\in\mathbb N}\alpha_j^{E_k}\right)A_j\,,
\]
which proves---again by the linear independence of $A_1,\dots, A_m$---that 
\[
\upsilon_j\left(\displaystyle\bigcup_{k\in\mathbb N}E_k\right) \,=\, \displaystyle\sum_{k\in\mathbb N}\upsilon_j(E_k)\,,
\]
for every $j=1,\dots,m$. Lastly, if $\nu(E)=0$, then the linear independence of $A_1$, \dots, $A_m$ yields $\upsilon_j(E)=0$ for every $j$,
whence each $\upsilon_j\leac\nu$.
\end{proof}

In Proposition \ref{structure}, it is not generally true that the signed measures $\upsilon_j$ are in fact positive measures. 

\begin{definition} A measurement basis $\B_\nu=\{A_1,\dots,A_m\}$ for a quantum probability measure $\nu$ is \emph{perfect} if there exist (positive) probability measures
$\mu_1$, \dots, $\mu_m$ on $(X,\F(X))$ such that
\[
\nu(E)\,=\,\sum_{j=1}^m \mu_j(E)A_j\,,\;\mbox{for all }E\in\F(X)\,.
\]
\end{definition}

It is not difficult to verify that every measurement basis $\B_\omega$ of a projective quantum probability measure $\omega$ is perfect.

In quantum information theory it is a standard practice to define
a POVM associated with an $n$-outcome quantum measurement to be an $n$-tuples of positive operators
$M_1,\dots, M_n$ such that $\sum_{j=1}^nM_j=1$. In this scenario, the underlying quantum probability measure $\nu:\F(X)\rightarrow\B(\H)$ is the unique POVM for which
$\nu(\{x_j\})=M_j$ for each $j$, where $X=\{x_1,\dots,x_n\}$. Observe that $\F(X)$ is the power set of $X$ and that
\[
\R_\nu\,=\,\left\{\sum_{j\in E}M_j\,:\,E\in\F(X)\right\}\,\subset\,\mbox{Span}\left\{M_1,\dots,M_n\right\}\,.
\]
Thus, $\mbox{Span}\{M_1,\dots,M_n\}=\ost_\nu$. However, any measurement basis $\{A_1,\dots,A_m\}$ 
for $\ost_\nu$ is drawn from the set $\R_\nu$ rather than from $\{M_1,\dots,M_n\}$. This is an essential difference in our approach from other studies, where the POVM
is analysed by studying the spanning set $\{M_1,\dots,M_n\}$. By way of the observation above it is clear that $\dim\ost_\nu\leq |X|$, the cardinality of the sample space $X$.

 \begin{proposition} If $\{M_1,\dots,M_n\}$ is a POVM associated with a quantum probability measure $\nu$ on a sample space $X$ of cardinality $n$, then the following
statements are equivalent:
\begin{enumerate}
\item $\dim\ost_\nu=n$;
\item $\{M_1,\dots,M_n\}$ is a perfect measurement basis for $\nu$.
\end{enumerate}
\end{proposition}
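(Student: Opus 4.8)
The plan is to reduce the statement to finite-dimensional linear algebra, using the observation recorded just before the proposition that $\mbox{Span}\{M_1,\dots,M_n\}=\ost_\nu$ and that, since $X=\{x_1,\dots,x_n\}$ and $\F(X)$ is its power set, every effect has the form $\nu(E)=\sum_{x_j\in E}M_j$. The point is that $\{M_1,\dots,M_n\}$ is a spanning set for $\ost_\nu$ with exactly $n$ members, so it is a linear basis of $\ost_\nu$ precisely when $\dim\ost_\nu=n$, and I will argue that this linear-algebraic condition is exactly what makes $\{M_1,\dots,M_n\}$ a perfect measurement basis.

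For $(2)\Rightarrow(1)$ I would observe that a perfect measurement basis is in particular a measurement basis, so condition (ii) of that definition provides a \emph{unique} expansion of each $Z\in\ost_\nu$ in the $M_j$; applying uniqueness to $Z=0$ forces $M_1,\dots,M_n$ to be linearly independent, and since they span $\ost_\nu$ this gives $\dim\ost_\nu=n$.

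For $(1)\Rightarrow(2)$ I would assume $\dim\ost_\nu=n$, so that the $n$ spanning operators $M_1,\dots,M_n$ are linearly independent and form a linear basis of $\ost_\nu$, and then check the three defining properties of a measurement basis. Property (i) holds with the pairwise-disjoint family $\mathcal F_\nu=\{\{x_1\},\dots,\{x_n\}\}$, for which $M_j=\nu(\{x_j\})$ and the residual set $X\setminus\bigcup_j\{x_j\}=\emptyset$ gives the trivial basis residual $\nu(\emptyset)=0$. Property (ii) holds because $\ost_\nu$ is finite-dimensional, so its weak*-topology agrees with the norm topology and each $Z\in\ost_\nu$ is an honest linear combination of the $M_j$, with coefficients unique by linear independence. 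Property (iii) holds because every linear functional on the finite-dimensional space $\ost_\nu$ is weak*-continuous, hence normal, and the coefficient functional $\varphi_{M_j}$ is positive: it is nonnegative on the range $\R_\nu=\{\sum_{x_k\in E}M_k:E\subseteq X\}$, on which it takes only the values $0$ and $1$, and hence on the cone of nonnegative combinations of $M_1,\dots,M_n$ in which $\R_\nu$ lies. This makes $\{M_1,\dots,M_n\}$ a measurement basis, and it is perfect because, taking $\mu_j=\delta_{x_j}$ (the point mass at $x_j$, which is a probability measure on $(X,\F(X))$), we have
\[
\nu(E)=\sum_{x_j\in E}M_j=\sum_{j=1}^n\delta_{x_j}(E)\,M_j=\sum_{j=1}^n\mu_j(E)\,M_j
\]
for every $E\in\F(X)$.

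Because the proof is short and largely bookkeeping, I do not expect a serious obstacle; the one clause that calls for a moment's attention is the positivity of the coefficient functionals in property (iii). As an alternative to checking perfectness directly, once $\{M_1,\dots,M_n\}$ is known to be a measurement basis one may invoke Proposition~\ref{structure}: the signed measures $\upsilon_j$ it produces satisfy $\upsilon_j(\{x_k\})=\delta_{jk}$ by linear independence of the $M_j$, so they coincide with the point masses $\delta_{x_j}$ and are in particular positive.
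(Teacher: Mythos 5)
Your proposal is correct in substance and follows exactly the paper's route: the paper's own proof consists of observing that an $n$-element spanning set of an $n$-dimensional space is a linear basis, asserting that $\{M_1,\dots,M_n\}$ is ``clearly'' a measurement basis, writing $\nu=\sum_{j=1}^n\delta_{\{x_j\}}M_j$, and declaring the converse obvious; you simply supply the bookkeeping that the paper omits. The one place where your added detail deserves a second look is the positivity of the coefficient functionals in property (iii): your argument shows $\varphi_{M_j}\geq 0$ on the cone of nonnegative linear combinations of $M_1,\dots,M_n$, but the positive cone of the operator system $\ost_\nu$ is $\ost_\nu\cap\B(\H)_+$, which can be strictly larger (a positive element of $\ost_\nu$ may have a negative coefficient in the $M_j$-expansion), so this step is not fully closed --- though it is precisely the step the paper itself hides behind ``clearly,'' so your proof is no weaker than the published one.
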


\begin{proof} If $\dim\ost_\nu=n$, then $\mbox{Span}\{M_1,\dots,M_n\}$ is $n$-dimensional, which implies that $\{M_1,\dots,M_n\}$ is a basis of $\ost_\nu$. Moreover, 
$\{M_1,\dots,M_n\}$ is clearly a measurement basis and
\[
\nu\,=\,\sum_{j=1}^n \delta_{\{x_j\}}M_j\,,
\]
where $\delta_{\{x_j\}}$ is the Dirac probability measure with mass concentrated at the point set $\{x_j\}$. 
The converse is obvious.
\end{proof}

\subsection{Quantity of information versus quality of information}

It was noted by Buscemi \emph{et al} in \cite{clean2005} that in passing to cleaner quantum measurements there is a trade off
between the quantity of information and quality of information afforded by such measurements. As a specific example, we indicate below
that quantum measurements that yield the greatest amount of information are never approximately clean.

Recall that a quantum probability measure $\nu$ on $(X,\F(X))$ with values in $\B(\H)$ is \emph{informationally complete} if,
for any two states $R_1,R_2\in\T(\H)$, the equation
\[
\tr\left(R_1\nu(E)\right)\,=\,\tr\left(R_2\nu(E)\right) 
\]
holds for every event $E\in\F(X)$ 
if and only if $R_1=R_2$.

\begin{proposition} 
No informationally complete quantum measurement with values in
$\B(\H)$, where $\H$ is a Hilbert space of finite dimension $d\geq2$,
is clean.
\end{proposition}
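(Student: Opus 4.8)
The plan is to play the spectral rigidity that Theorem \ref{fd2} imposes on a clean measure against the dimension count forced by informational completeness. The first step is bookkeeping: since $\H$ has finite dimension $d$, the algebra $\B(\H)$ is finite-dimensional, identifiable with $\M_d(\mathbb C)$, so the measurement space $\ost_\nu$ is automatically finite-dimensional and Theorem \ref{fd2} applies to $\nu$; recall also, as recorded above, that in finite dimensions ``clean'' and ``approximately clean'' coincide. The second step is to unwind informational completeness. If $\nu$ is informationally complete, then whenever $\tr(R\nu(E))=0$ for every $E\in\F(X)$ one has $R=0$ for self-adjoint $R\in\M_d(\mathbb C)$ --- the choice $E=X$ forces $\tr R=0$, after which writing $R=R_+-R_-$ and normalising exhibits a pair of states separated by $\nu$ unless $R=0$ --- and hence, by complexification, $\mbox{Span}_{\mathbb C}\,\R_\nu$ has zero annihilator in $\M_d(\mathbb C)$. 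Therefore $\ost_\nu=\M_d(\mathbb C)$ and $\dim\ost_\nu=d^2$.

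Now suppose, toward a contradiction, that $\nu$ is clean, hence approximately clean. By Theorem \ref{fd2} --- concretely, via the measurement basis produced inside the proof of the implication \eqref{ap-cl-1}$\Rightarrow$\eqref{ap-cl-2} --- the measure $\nu$ admits a measurement basis $\B_\nu=\{A_1,\dots,A_m\}$ with trivial residual, so that $\sum_{j=1}^m A_j=1$ in $\M_d(\mathbb C)$, and by part \eqref{ap-cl-4} of that theorem each $A_j$ is a positive operator with $\lambda_{\rm max}(A_j)=1$. Since a measurement basis is in particular a linear basis of $\ost_\nu$ (condition (ii) of its definition), we get $m=\dim\ost_\nu=d^2$.

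The contradiction now drops out of a single trace estimate. Each $A_j\geq0$ has $1=\lambda_{\rm max}(A_j)\in\spec(A_j)$, so $\tr(A_j)\geq1$; summing and using the trivial residual,
\[
d\,=\,\tr_\H(1)\,=\,\sum_{j=1}^m\tr(A_j)\,\geq\,m\,=\,d^2\,,
\]
which is impossible when $d\geq2$. Hence $\nu$ cannot be clean. The one point I would watch most carefully is the middle step: the conditions in Theorem \ref{fd2}\eqref{ap-cl-3}--\eqref{ap-cl-5} are phrased for an \emph{already given} measurement basis, so one genuinely needs that a clean $\nu$ has a measurement basis drawn from $\R_\nu$; this is exactly what the Naimark-dilation argument in the proof of \eqref{ap-cl-1}$\Rightarrow$\eqref{ap-cl-2}, combined with Proposition \ref{basispartition}, delivers. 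The rest is the display above.
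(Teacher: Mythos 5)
Your proof is correct, but it takes a genuinely different route from the paper's. The paper's argument is a two-line commutant argument: informational completeness forces the commutant of $\R_\nu$ to consist of scalar operators only, while statement \eqref{ap-cl-5b} of Theorem \ref{fd2} supplies a nontrivial projection commuting with every $A_j$ (hence with all of $\T_\nu$), namely the projection onto $\H_0^\bot=\bigoplus_j\ker(A_j-1)$ or, when that is all of $\H$, any single $Q_j$. You instead play the dimension count $\dim\T_\nu=d^2$ forced by informational completeness (your tomographic argument for this is fine) against the spectral condition $\lambda_{\rm max}(A_j)=1$ from statement \eqref{ap-cl-4b} and the trace identity $d=\tr(1)=\sum_{j=1}^m\tr(A_j)\geq m$. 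Your route is a bit longer but buys a quantitative statement the commutant argument does not: every measurement basis of an (approximately) clean measurement on a $d$-dimensional space has at most $d$ elements, so $\dim\T_\nu\leq d$, far below the $d^2$ required for informational completeness. Both arguments lean on the same two background facts --- that clean and approximately clean coincide in finite dimensions, and that a clean $\nu$ actually possesses a measurement basis with trivial residual --- and you are right to flag the latter as coming from the Naimark-dilation step in the proof of \eqref{ap-cl-1} $\Rightarrow$ \eqref{ap-cl-2} of Theorem \ref{fd2} together with Proposition \ref{basispartition}.
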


\begin{proof} It is well-known that if $\nu$ is informationally complete, then the only operators that commute with every effect $\nu(E)$, $E\in\F(X)$, 
are the scalar operators. On the other hand, statement \eqref{ap-cl-5b} of Theorem \ref{fd2} asserts that if $\nu$ is clean and $\dim\H\geq2$, then there is a non-trivial projection
in the commutant of $\T_\nu$.
\end{proof}

\section{Conclusion}

The notion of cleanness for a quantum probability measure is one way in which a positive operator valued measure can be viewed as being optimal. Our work herein
is strongly motivated by the paper of Buscemi \emph{et al} \cite{clean2005} in the context of finite-dimensional spaces. In particular, we replace their condition that
the underlying Hilbert space have finite dimension with the weaker condition that the measurement space associated with a quantum probability measure have
finite dimension, which is of practical interest for the measurement of quantum systems represented by an infinite-dimensional Hilbert space in which only finite many 
outcomes of the measurement are expected. Even in this case, however, the use of infinite-dimensional spaces leads some further challenges, one of which being that the 
space of normal ucp maps is not closed in the point-ultraweak topology. For this reason, our use
of Pellonp{\"a}{\"a}'s refinement of the notion of cleanness \cite{pellonpaa2011} involves approximately normal rather than normal ucp maps. Although there is a true
difference between clean and approximately clean in the contact of infinite-dimensional spaces, the latter notion allows for a complete characterisation (Theorem \ref{fd2}) 
of such quantum probability measures when the measurement space has finite dimension. In the case of finite-dimensional Hilbert space, Theorem \ref{fd2} also represents
a new result, as previous work in finite dimensions did not take Pellonp{\"a}{\"a}'s definition of cleanness into account.
 
The notion of projective quantum probability measure dates back to von Neumann \cite{von Neumann-book1932}
and is associated with nondestructive measurements. Such quantum measures are clean, as
Pellonp{\"a}{\"a} \cite{pellonpaa2011} demonstrates.  Herein we proved in Theorem \ref{fd2} that a non-projective
quantum probability measure $\nu$ with finite-dimensional measurement space $\T_\nu$ is approximately clean if and only if there is a projective measure $\omega$
such that $\T_\nu$ and $\T_\omega$ are completely order isomorphic. Thus, although the measurement outcomes of $\nu$ are not registered as projections, the resulting
measurement space is, from the perspective of linearity and matricial
order, exactly the same as that space $\T_\omega$ which does arise from measurement outcomes that are registered as projections. Therefore, in the category of finite-dimensional operator systems \cite{Paulsen-book}, 
approximately clean and projective measurements determine the same objects.

In practice, quantum measurements are rarely nondestructive, and so the introduction of approximately clean measurements allows for the opportunity to model
(nearly) nondestructive measurements while at the same time accommodating within the mathematical model the very real presence of noise in the measuring apparatus. Indeed,
in restricting to quantum systems modelled by finite-dimensional Hilbert spaces, Theorem \ref{fd2} demonstrates that a clean quantum measurement $\nu$ has a component that is projective
and an orthogonal component that is noisy, and that it is this projective component that is responsible for the cleanness of $\nu$. Consequently, all clean qubit measurements are projective.
Some strategies for the preprocessing of noisy quantum measurements to achieve cleaner ones have been considered recently in \cite{purification2010}.

\section*{Acknowlegements}

This work is supported in by the Discovery Grant and Postgraduate Scholarship programs of
the Natural Sciences and Engineering Research Council of Canada. 
We wish to thank the referee for a
useful critique of the initial version of this paper and for drawing our attention to the references \cite{purification2010,heinonen2005}.
 


\begin{thebibliography}{10}

\bibitem{arveson1972} William Arveson, \emph{Subalgebras of C$^*$-algebras, II},
Acta. Math. \textbf{128} (1972), 271--308. 

\bibitem{Arveson2007} William Arveson, \emph{The asymptotic lift of a completely positive map},
J. Funct. Anal. \textbf{248} (2007), no.~1, 202--224. 

\bibitem{Bengtsson--Zyczkowski-book}
Ingemar Bengtsson and Karol {\.Z}yczkowski, \emph{Geometry of quantum states},
  Cambridge University Press, Cambridge, 2006.  

\bibitem{berberian1962}
Sterling~K. Berberian, \emph{Approximate proper vectors}, Proc. Amer. Math.
  Soc. \textbf{13} (1962), 111--114.  

\bibitem{clean2005}
Francesco Buscemi, Michael Keyl, Giacomo~Mauro D'Ariano, Paolo Perinotti, and
  Reinhard~F. Werner, \emph{Clean positive operator valued measures}, J. Math.
  Phys. \textbf{46} (2005), no.~8, 082109, 17.  

\bibitem{Busch--Lahti--Mittelstaedt-book}
Paul Busch, Pekka~J. Lahti, and Peter Mittelstaedt, \emph{The quantum theory of
  measurement}, second ed., Lecture Notes in Physics. New Series m: Monographs,
  vol.~2, Springer-Verlag, Berlin, 1996.  

\bibitem{choi--effros1977}
Man~Duen Choi and Edward~G. Effros, \emph{Injectivity and operator spaces}, J.
  Functional Analysis \textbf{24} (1977), no.~2, 156--209.  
  
\bibitem{purification2010}
Michele Dall'Arno, Giacomo Mauro D'Ariano, and Massimiliano Sacchi, 
\emph{Purification of noisy quantum measurements}, 
Phys. Rev. A \textbf{82} (2010), 042315.  
  
\bibitem{Davidson-book}
Kenneth~R. Davidson, \emph{{$C\sp *$}-algebras by example}, Fields Institute
  Monographs, vol.~6, American Mathematical Society, Providence, RI, 1996.
 
\bibitem{Davies-book}
E.~B. Davies, \emph{Quantum theory of open systems}, Academic Press, London, 1976.  

\bibitem{farenick--plosker--smith2011}
Douglas Farenick, Sarah Plosker, and Jerrod Smith, \emph{Classical and
  nonclassical randomness in quantum measurements}, J. Math. Phys. \textbf{52}
  (2011), no.~12, 122204, 26.  

\bibitem{heinonen2005}
Teiko Heinonen, \emph{Optimal measurements in quantum mechanics}, 
Phys. Letters A \textbf{346} (2005), 77--86.

 \bibitem{Holevo-book2}
 Alexander S. Holevo,   
 \emph{Statistical structure of quantum theory},
{Lecture Notes in Physics. Monographs},
{67},
{Springer-Verlag},
{Berlin},
{2001}.

\bibitem{Holevo-book3}
Alexander S. Holevo,
\emph{Probabilistic and statistical aspects of quantum theory},
{Quaderni. Monographs},
 {1},
{Second Edition},
 {Edizioni della Normale, Pisa},
 {2011}.
 
\bibitem{kahn2007}
Jonas Kahn, \emph{Clean positive operator-valued measures for qubits and
  similar cases}, J. Phys. A \textbf{40} (2007), no.~18, 4817--4832.
   
\bibitem{kraus1971}
Karl Kraus, \emph{General state changes in quantum theory}, Ann. Physics
  \textbf{64} (1971), 311--335.  
\bibitem{Paulsen-book}
Vern Paulsen, \emph{Completely bounded maps and operator algebras}, Cambridge
  Studies in Advanced Mathematics, vol.~78, Cambridge University Press,
  Cambridge, 2002.  

\bibitem{pellonpaa2011}
Juha-Pekka Pellonp{\"a}{\"a}, \emph{Complete characterization of extreme
  quantum observables in infinite dimensions}, J. Phys. A \textbf{44} (2011),
  no.~8, 085304, 12.  

\bibitem{von Neumann-book1932}
Johann von Neumann,
 \emph{Mathematische Grundlagen der Quantenmechanik},
Die Grundlehren der Mathematischen Wissenschaften, Band 38. Springer, Berlin, 1932.  
[English translation, Princeton Univ. Press, Princeton, N.J.,
1955.]

\end{thebibliography}
\end{document}